\documentclass{iopart}

\usepackage{graphicx}
\usepackage{iopams}
\usepackage{amsopn}
\usepackage{amsthm}
\usepackage{algorithmic}
\usepackage{algorithm}
\usepackage{enumerate}
\usepackage[colorlinks=true, citecolor=blue, filecolor=black,linkcolor=red, urlcolor=blue, hypertexnames=false]{hyperref}

\usepackage{todonotes}

\graphicspath{{figures/}}

\newcommand{\eqref}[1]{(\ref{#1})}
\newcommand{\ppgnh}{ppGNH }

\newcommand{\param}{x}
\newcommand{\data}{y}
\newcommand{\error}{e}
\newcommand{\forward}{G}
\newcommand{\linear}{J}
\newcommand{\hessian}{H}
\newcommand{\paramspace}{\mathbb{X}}
\newcommand{\dataspace}{\mathbb{Y}}

\newcommand{\misfit}{\eta}

\newcommand{\rank}{\mathrm{rank}}
\newcommand{\myspan}{\mathrm{span}}

\newcommand{\prmean}{\mu_{\rm pr}}
\newcommand{\prcov}{\Gamma_{\rm pr}}
\newcommand{\pomean}{\mu_{\rm pos}}
\newcommand{\pocov}{\Gamma_{\rm pos}}
\newcommand{\obscov}{\Gamma_{\rm obs}}

\newcommand{\chol}{L}
\newcommand{\basis}{U}
\newcommand{\basisf}{u}
\newcommand{\basisfv}{v}
\newcommand{\gbasis}{\Phi}
\newcommand{\gbasisv}{\Psi}

\newcommand{\gbasisfv}{\psi}
\newcommand{\projlis}{\Pi_r}
\newcommand{\projcs}{\Pi_\perp}

\newcommand{\normal}{\mathcal{N}}
\newcommand{\real}{\mathbb{R}}

\newcommand{\expect}{\mathbb{E}}
\newcommand{\expectpos}{\expect_{\pi}}
\newcommand{\expectapprox}{\expect_{\tilde{\pi}}}

\newcommand{\MC}{Q_N}
\newcommand{\MCRB}{\tilde{Q}_N}

\newcommand{\CR}{\mathcal{R}}

\DeclareMathOperator{\range}{range}

\newtheorem{theorem}{Theorem}%

\newtheorem{lemma}{Lemma}
\newtheorem{problem}{Problem}%
\newtheorem{algo}{Algorithm}

\newtheorem{example}{Example}
\newtheorem{definition}{Definition}

\begin{document}

\title{Likelihood-informed dimension reduction for nonlinear inverse problems}
\author{T Cui$^1$, J Martin$^2$, Y M Marzouk$^1$, A Solonen$^{1,3}$, A Spantini$^1$}
\address{$^1$Department of Aeronautics and Astronautics, Massachusetts Institute of Technology, 77 Massachusetts Ave, Cambridge, MA 02139 USA}
\address{$^2$Institute for Computational Engineering and Sciences, The University of Texas at Austin, 201 East 24th St, Austin, TX 78712 USA}
\address{$^3$Lappeenranta University of Technology, Department of Mathematics and Physics, 53851 Lappeenranta, Finland}
\ead{tcui@mit.edu, jmartin@ices.utexas.edu, ymarz@mit.edu, solonen@mit.edu, spantini@mit.edu}

\begin{abstract} 

  The intrinsic dimensionality of an inverse problem is affected by
  prior information, the accuracy and number of observations, and the
  smoothing properties of the forward operator. From a Bayesian
  perspective, changes from the prior to the posterior may, in many
  problems, be confined to a relatively low-dimensional subspace of
  the parameter space. We present a dimension reduction approach that
  defines and identifies such a subspace, called the
  ``likelihood-informed subspace'' (LIS), by characterizing the
  relative influences of the prior and the likelihood over the support
  of the posterior distribution. This identification enables new and
  more efficient computational methods for Bayesian inference with
  nonlinear forward models and Gaussian priors. In particular, we
  approximate the posterior distribution as the product of a
  lower-dimensional posterior defined on the LIS and the prior
  distribution marginalized onto the complementary subspace. Markov
  chain Monte Carlo sampling can then proceed in lower dimensions,
  with significant gains in computational efficiency. We also
  introduce a Rao-Blackwellization strategy that de-randomizes Monte
  Carlo estimates of posterior expectations for additional variance
  reduction. We demonstrate the efficiency of our methods using two
  numerical examples: inference of permeability in a groundwater
  system governed by an elliptic PDE, and an atmospheric remote
  sensing problem based on Global Ozone Monitoring System (GOMOS)
  observations.
  
\quad\\
\noindent{Keywords\/}: Inverse problem, Bayesian inference, dimension
reduction, low-rank approximation, Markov chain Monte Carlo, variance reduction

\end{abstract}

\maketitle

\section{Introduction}
\label{sec:intro}

Inverse problems arise from indirect observations of parameters of
interest. The Bayesian approach to inverse problems formalizes the
characterization of these parameters through exploration of the
\textit{posterior distribution} of parameters conditioned on data
\cite{Tarantola_2004, Kaipio_2005, Stuart_2010}. Computing
expectations with respect to the posterior distribution yields not
only point estimates of the parameters (e.g., the posterior mean), but
a complete description of their uncertainty via the posterior
covariance and higher moments, marginal distributions, quantiles, or
event probabilities. Uncertainty in parameter-dependent predictions
can also be quantified by integrating over the posterior distribution.

The parameter of interest in inverse problems is often a function of
space or time, and hence an element of an infinite-dimensional
function space \cite{Stuart_2010}. In practice, the parameter field
must be discretized, and the resulting inference problem acquires a
high but finite dimension. The computation of posterior expectations
then proceeds via posterior sampling, most commonly using Markov chain
Monte Carlo (MCMC) methods \cite{MCMC_practice, Liu_2001,
  Handbook_MCMC}. The computational cost and efficiency of an MCMC
scheme can be strongly affected by the parameter dimension,
however. The convergence rates of standard MCMC algorithms usually
degrade with parameter dimension \cite{RGG_1997, Roberts_1998,
  Roberts_2001, Mattingly_2012, PST_2012}; one manifestation of this
degradation is an increase in the mixing time of
the chain, which in turn leads to higher variance in posterior
estimates. Some recent MCMC algorithms, formally derived in the
infinite-dimensional setting \cite{BRSV_2008, CRSW_2012}, do not share
this scaling problem. Yet even in this setting, we will argue that
significant variance reduction can be achieved through explicit
dimension reduction and through de-randomization of posterior
estimates, explained below.

This paper proposes a method for \textit{dimension reduction} in
Bayesian inverse problems. We reduce dimension by identifying a
subspace of the parameter space that is \textit{likelihood-informed};
this notion will be precisely defined in a relative sense, i.e.,
relative to the prior. Our focus is on problems with nonlinear forward
operators and Gaussian priors, but builds on low-rank approximations
\cite{Flath_etal_2011} and optimality results \cite{Linear_Redu_2014}
developed for the linear-Gaussian case. Our dimension reduction
strategy will thus reflect the combined impact of prior smoothing, the
limited accuracy or number of observations, and the smoothing
properties of the forward operator.
Identification of the likelihood-informed subspace (LIS) will let us
write an approximate posterior distribution wherein the distribution
on the complement of this subspace is taken to be independent of the
data; in particular, the posterior will be approximated as the product
of a low-dimensional posterior on the LIS and the marginalization of
the prior onto the complement of the LIS.  The key practical benefit
of this approximation will be \textit{variance reduction} in the
evaluation of posterior expectations. First, Markov chain Monte Carlo
sampling can be restricted to coordinates in the likelihood-informed
space, enabling greater sampling efficiency---i.e., more independent
samples in a given number of MCMC steps or a given computational
time. Second, the product form of the approximate posterior will allow
sampling in the complement of the likelihood-informed space to be
avoided altogether, thus producing Rao-Blackwellized or analytically
conditioned estimates of certain posterior expectations.

Dimension reduction for inverse problems has been previously pursued
in several ways. \cite{Marzouk_2009} constructs a low dimensional
representation of the parameters by using the truncated
Karhunen-L\`{o}eve expansion \cite{Karhunen_1947, Loeve_1978} of the
prior distribution. A different approach, combining prior and
likelihood information via low-rank approximations of the
prior-preconditioned Hessian of the log-likelihood, is used in
\cite{Flath_etal_2011} to approximate the posterior covariance in
linear inverse problems. In the nonlinear setting, low-rank
approximations of the prior-preconditioned Hessian are used to
construct proposal distributions in the stochastic Newton MCMC method
\cite{Martin_2012} or to make tractable Gaussian approximations at the
posterior mode in \cite{Petra_stochnewton2013}---either as a Laplace
approximation, as the proposal for an independence MCMC sampler, or as
the fixed preconditioner for a stochastic Newton proposal.  We note
that these schemes bound the tradeoff between evaluating Hessian
information once (at the posterior mode) or with every sample (in
local proposals). In all cases, however, MCMC sampling proceeds in the
full-dimensional space.

The dimension reduction approach explored in this paper, by contrast,
confines sampling to a lower-dimensional space. We extend the
posterior approximation proposed in \cite{Linear_Redu_2014} to the
nonlinear setting by making essentially a low-rank approximation of
the \textit{posterior expectation} of the prior-preconditioned
Hessian, from which we derive a projection operator. This projection
operator then yields the product-form posterior approximation
discussed above, which enables variance reduction through
lower-dimensional MCMC sampling and Rao-Blackwellization of posterior
estimates.

We note that our dimension reduction approach does not depend on the
use of any specific MCMC algorithm, or even on the use of MCMC. The
low-dimensional posterior defined on coordinates of the LIS is
amenable to a range of posterior exploration or integration
approaches. We also note that the present analysis enables the
construction of dimension-independent analogues of existing MCMC
algorithms with essentially no modification.
This is possible because in inverse problems with formally
discretization-invariant posteriors---i.e., problems where the forward
model converges under mesh refinement and the prior distribution
satisfies certain regularity conditions \cite{LSS_2009,
  Stuart_2010}---the LIS can also be discretization invariant.  We
will demonstrate these discretization-invariance properties
numerically.

The rest of this paper is organized as follows.
In Section 2, we briefly review the Bayesian formulation for inverse problems. 
In Section 3, we introduce the likelihood-informed dimension reduction
technique, and present the posterior approximation and
reduced-variance Monte Carlo estimators based on the LIS. We also
present an algorithm for constructing the likelihood-informed
subspace.
In Section 4, we use an elliptic PDE inverse problem to demonstrate
the accuracy and computational efficiency of our posterior estimates
and to explore various properties of the LIS, including its dependence
on the data and its discretization invariance.
In Section 5, we apply our variance reduction technique to an
atmospheric remote sensing problem.
Section 6 offers concluding remarks.

\section{Bayesian formulation for inverse problems}
\label{sec:background}

This section provides a brief overview of the Bayesian framework for the inverse problems as introduced in \cite{Tarantola_2004, Kaipio_2005, Stuart_2010}.
Consider the inverse problem of estimating parameters $\param$ from data $\data$, where
\begin{equation}
\data = \forward(\param) + \error \, .
\end{equation}
Here $\error$ is a random variable representing noise and/or model error, which appears additively, and $\forward$ is a known mapping from the parameters to the observables.
In a Bayesian setting, we model the parameters $\param$ as a random variable and, for simplicity, assume that the range of this random variable is a finite dimensional space $\paramspace \subseteq \real^{n}$. Then the parameter of interest is characterized by its posterior distribution conditioned on a realization of the data, $\data \in \dataspace \subseteq \real^{d}$:
\begin{equation}
\label{eq:post}
\pi(\param|\data) \propto \pi( \data | \param )  \pi_0(\param).
\end{equation}
We assume that all distributions have densities with respect to Lebesgue measure. The posterior probability density function above is the product of two terms: the prior density $\pi_0(\param)$, which models knowledge of the parameters before the data are observed, and the likelihood function $\pi( \data | \param )$, which describes the probability distribution of $\data$ for any value of $\param$.

We assume that the prior distribution is a multivariate Gaussian $\normal(\prmean, \prcov)$, where the covariance matrix $\prcov$ can be also defined by its inverse, $\prcov^{-1}$, commonly referred to as the precision matrix. We model the additive noise with a zero mean Gaussian distribution, i.e., $\error \sim \normal(0, \obscov)$. This lets us define the data-misfit function
\begin{equation}
\misfit(\param) = \frac{1}{2}  \left\| \obscov^{-\frac12} \left( \forward(\param) - \data \right) \right\|^2,
\label{eq:misfit}
\end{equation}
such that the likelihood function is proportional to $\exp \left( -\misfit(\param) \right )$.

\section{Methodology}
\label{sec:methodology}

\subsection{Optimal dimension reduction for linear inverse problems}
Consider a linear forward model, $\forward(\param) = \forward\param$, with a Gaussian likelihood and a Gaussian prior as defined in Section \ref{sec:background}. The resulting posterior is also Gaussian, $\pi(\param \vert \data) = \normal(\pomean, \pocov)$, with mean and covariance given by 
\begin{equation}
\pomean = \pocov^{} \left( \prcov^{-1} \prmean^{} +  \forward^\top \obscov^{-1} \data \right) \ \  \mathrm{and} \ \ \pocov^{} = \left(\hessian^{} + \prcov^{-1}\right)^{-1} ,
\label{eq:pos_cov}
\end{equation}
where $\hessian = \forward ^\top \obscov^{-1} \forward^{}$ is the Hessian of the data-misfit function \eqref{eq:misfit}. Without loss of generality we can assume zero prior mean and a positive definite prior covariance matrix. 

Now consider approximations to the posterior distribution of the form
\begin{equation} \label{eq:approx_post_linear}
\tilde{\pi}(\param \vert \data) \propto \pi\left(\data  \vert  P_r \param\right) \pi_0(\param), 
\end{equation}
where $P_r = P_r^2$ is a rank-$r$ projector and $\pi\left(\data \vert P_r \param\right)$ is an approximation to the original likelihood function $\pi  \left ( \data \vert \param \right )$. Approximations of this form can be computationally advantageous when operations involving the prior (e.g., evaluations or sampling) are less expensive than those involving the likelihood. As described in \cite{Linear_Redu_2014}, they are also the natural form with which to approximate a Bayesian update, particularly in the inverse problem setting with high-dimensional $\param$. In the deterministic case, inverse problems are ill-posed; the data cannot inform all directions in the parameter space. Equivalently, the spectrum of $H$ is compact or decays quickly. Thus one should be able to explicitly project the argument of the likelihood function onto a low-dimensional space without losing much information in the process. The posterior covariance remains full rank, but the update from prior covariance to posterior covariance will be low rank. The challenge, of course, is to find the best projector $P_r$ for any given $r$. The answer will involve balancing the influence of the prior and the likelihood.
In the following theorem, we introduce the optimal projector and  characterize its approximation properties.
\begin{theorem}
\label{minimizer_theorem}
Let $\prcov = \chol \chol^\top$ be a symmetric decomposition of the prior covariance matrix  and let $(\lambda_i, \basisfv_i)$ be the  eigenvalue-eigenvector pairs of the  prior-preconditioned Hessian $\left(\chol^\top \hessian \chol \right)$ such that  $\lambda_i \geq \lambda_{i+1}$. Define the directions $u_i= L v_i$ and $w_i= L^{-\top}v_i$ together with the matrices $U_r=\left[ u_1 , \ldots , u_r \right]$ and $W_r=\left[ w_1 , \ldots , w_r \right]$. Then, the projector $P_r$ given by:
\[
P_r^{} = \basis_r^{} W_r^\top,
\]
yields an approximate posterior density of the form  $\tilde{\pi}(\param \vert \data)=\mathcal{N}\left( \pomean^{(r)}, \pocov^{(r)}\right)$ and  is optimal in the following sense: 
\begin{enumerate}
 \item  $\pocov^{(r)}$ minimizes the F\"{o}rstner  distance  \cite{FB_1999} from the exact posterior covariance over the class of positive definite matrices that can be written as rank $r$ negative semidefinite updates of the prior covariance.
  \item  $\pomean^{(r)} = A^\ast y$ minimizes the Bayes risk\ $\mathbb{E}_{x,y} \left[ \left\|   \mu(y)  - x \right\|_{\Gamma_{\rm pos}^{-1}}^2 \right]$ over the class of all linear transformations of the data $\mu(y) = A y$ with $\rank(A) \leq r$.
\end{enumerate} 
\end{theorem}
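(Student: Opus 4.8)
Here is the strategy I would follow. The statement has three ingredients --- the explicit form of $\tilde\pi$, the F\"orstner-optimality of its covariance, and the Bayes-risk optimality of its mean --- so I would handle them in that order. As a preliminary step, take $\prmean = 0$ and observe that $\tilde\pi(\param\vert\data)\propto\pi(\data\vert P_r\param)\,\pi_0(\param)$ is Gaussian with precision $P_r^\top\hessian P_r + \prcov^{-1}$ and mean $(P_r^\top\hessian P_r + \prcov^{-1})^{-1}P_r^\top\forward^\top\obscov^{-1}\data$. Using the biorthogonality $W_r^\top U_r = I_r$, the identity $U_r^\top\hessian U_r = \mathrm{diag}(\lambda_1,\dots,\lambda_r)$ (because $\chol^\top\hessian\chol\, v_j = \lambda_j v_j$), and the spectral identities $\prcov = \sum_i u_i u_i^\top$, $\prcov^{-1} = \sum_i w_i w_i^\top$ (consequences of $\prcov = \chol\chol^\top$ and orthonormality of the $v_i$), one computes
\[
\pocov^{(r)} = \big(P_r^\top\hessian P_r + \prcov^{-1}\big)^{-1} = \sum_{i\le r}\frac{1}{1+\lambda_i}u_i u_i^\top + \sum_{i>r}u_i u_i^\top = \prcov - \sum_{i\le r}\frac{\lambda_i}{1+\lambda_i}u_i u_i^\top ,
\]
so $\pocov^{(r)}\succ 0$ is a rank-$r$ negative semidefinite update of $\prcov$ (hence admissible in part (i)), and $\pomean^{(r)} = A^\ast\data$ with $A^\ast = \pocov^{(r)}P_r^\top\forward^\top\obscov^{-1}$ of rank at most $r$.

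For part (i) I would exploit congruence invariance of the F\"orstner metric, $d_{\mathcal F}(M\Sigma M^\top, M\Sigma' M^\top) = d_{\mathcal F}(\Sigma,\Sigma')$ for invertible $M$, since the pencil's generalized eigenvalues are unchanged. Taking $M = V^\top\chol^{-1}$ with $V = [v_1,\dots,v_n]$ sends $\prcov\mapsto I$, $\pocov\mapsto(I+\Lambda)^{-1}$ with $\Lambda=\mathrm{diag}(\lambda_i)$, and maps the admissible class bijectively onto $\{I - KK^\top\succ 0:\ \rank K\le r\}$, which by Sherman--Morrison--Woodbury equals $\{(I+M)^{-1}:\ M\succeq 0,\ \rank M\le r\}$. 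It then suffices to minimize $\sum_k\ln^2\theta_k\!\big((I+\Lambda)^{-1/2}(I+M)(I+\Lambda)^{-1/2}\big)$ over such $M$, where $\theta_k$ is the $k$-th largest eigenvalue. Writing this matrix as $D+N$ with $D=(I+\Lambda)^{-1}$ (all eigenvalues $\le 1$) and $N\succeq 0$ of rank $\le r$, two Weyl inequalities finish it: $N\succeq 0$ gives $\theta_k(D+N)\ge\theta_k(D)$, while $\rank N\le r$ gives $\theta_k(D+N)\le\theta_{k-r}(D)$ for $k>r$; hence each such $\theta_k(D+N)$ lies in $[\theta_k(D),\theta_{k-r}(D)]\subseteq(0,1]$, the region where $t\mapsto\ln^2 t$ is decreasing, so $\ln^2\theta_k(D+N)\ge\ln^2\theta_{k-r}(D)$. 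Summing over $k>r$ and discarding the nonnegative terms $k\le r$,
\[
\sum_k\ln^2\theta_k(D+N)\ \ge\ \sum_{j=1}^{n-r}\ln^2\theta_j(D)\ =\ \sum_{i>r}\ln^2(1+\lambda_i),
\]
and the right-hand side is attained by $M=\mathrm{diag}(\lambda_1,\dots,\lambda_r,0,\dots,0)$, which pulls back to $\pocov^{(r)}$; strict monotonicity of $\ln^2$ on $(0,1)$ also gives essential uniqueness.

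For part (ii) I would compute the Bayes risk directly: with $\data=\forward\param+\error$, $\param\sim\normal(0,\prcov)$, $\error\sim\normal(0,\obscov)$ independent, expanding the second moments of $A\data-\param$ and completing the square in $A$ about the Kalman gain $K_{\rm opt}=\prcov\forward^\top\datacov^{-1}$, $\datacov=\forward\prcov\forward^\top+\obscov$, and using $\pocov=\prcov-K_{\rm opt}\forward\prcov$ yields
\[
\expect_{\param,\data}\big\|A\data-\param\big\|_{\pocov^{-1}}^2 \ =\ n + \big\|\pocov^{-1/2}(A-K_{\rm opt})\datacov^{1/2}\big\|_F^2 .
\]
The rank-preserving substitution $B=\pocov^{-1/2}A\datacov^{1/2}$ reduces this to $\min_{\rank B\le r}\|B-C\|_F^2$ with $C=\pocov^{-1/2}K_{\rm opt}\datacov^{1/2}$, whose solution is the rank-$r$ truncated SVD of $C$ (Eckart--Young). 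Since $CC^\top=\pocov^{-1/2}(\prcov-\pocov)\pocov^{-1/2}$ has eigenvalues $\lambda_i$ (because $\pocov^{-1}\prcov=I+\hessian\prcov$ is similar to $I+\chol^\top\hessian\chol$), the singular values of $C$ are $\sqrt{\lambda_i}$ with singular subspaces spanned by the $v_i$; undoing $A^\ast=\pocov^{1/2}B^\ast\datacov^{-1/2}$ and simplifying once more with Woodbury identifies $A^\ast$ with $\pocov^{(r)}P_r^\top\forward^\top\obscov^{-1}$, i.e.\ with the mean map of $\tilde\pi$ from the preliminary step, and the attained minimum is $n+\sum_{i>r}\lambda_i$.

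The main obstacle is part (i): $\ln^2$ is neither monotone nor convex on all of $(0,\infty)$, so a crude interlacing bound does not control $\sum_k\ln^2\theta_k$. The idea that unlocks it is to whiten by the prior \emph{first}, which forces every informative generalized eigenvalue of the pencil $(\Sigma,\pocov)$ into $(0,1]$ --- exactly the interval on which $\ln^2$ is monotone and on which the Weyl bounds become sharp --- and then to verify carefully the Woodbury bijection between negative semidefinite rank-$r$ updates of $\prcov$ and positive semidefinite rank-$r$ updates of $\prcov^{-1}$; the degenerate case $\lambda_i=0$ needs only a trivial separate check.
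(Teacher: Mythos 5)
The paper itself does not prove Theorem~\ref{minimizer_theorem}; it defers entirely to \cite{Linear_Redu_2014}, so the only meaningful comparison is with that reference, and your reconstruction follows essentially the same route taken there: whiten by the prior and use congruence invariance of the F\"orstner metric plus Weyl-type eigenvalue inequalities for the covariance statement, and a Pythagorean decomposition of the Bayes risk about the full Kalman gain followed by Eckart--Young for the mean statement. Your outline is sound. The computation of $\pocov^{(r)}$ and of the mean map from the projected likelihood is correct (via biorthogonality $W_r^\top U_r = I_r$ and the dual spectral expansions of $\prcov$ and $\prcov^{-1}$); the bijection between positive definite rank-$r$ negative semidefinite updates of $\prcov$ and matrices $(I+M)^{-1}$ with $M\succeq 0$, $\rank M\le r$ is legitimate; and the two Weyl bounds do confine the eigenvalues $\theta_k(D+N)$, $k>r$, to $(0,1]$, exactly where $t\mapsto\ln^2 t$ is decreasing, which is the key observation that makes the lower bound $\sum_{i>r}\ln^2(1+\lambda_i)$ go through and be attained by the stated $\pocov^{(r)}$. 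The one place you assert rather than verify is the final identification in part (ii): that the rank-$r$ truncated SVD of $C=\pocov^{-1/2}K_{\rm opt}\datacov^{1/2}$, pulled back through $A^\ast=\pocov^{1/2}B^\ast\datacov^{-1/2}$, coincides with $\pocov^{(r)}P_r^\top\forward^\top\obscov^{-1}$. This is a routine but nontrivial computation --- most cleanly done through the SVD of $\obscov^{-1/2}\forward\chol$, whose right singular vectors are the $v_i$ and whose singular values are $\sqrt{\lambda_i}$ --- and it should be written out, with a remark that when some $\lambda_i$ coincide the Eckart--Young minimizer is not unique and the stated $A^\ast$ is one (optimal) choice.
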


\begin{proof}
We refer the reader to \cite{Linear_Redu_2014} for a proof and detailed discussion. 
\end{proof}

The vectors $(u_1,\ldots,u_r)$ span the range of the optimal projector; we call this range the \textit{likelihood-informed subspace} of the linear inverse problem. Note that the $(u_i)$ are eigenvectors of the pencil $(H,\Gamma_{\rm pr}^{-1})$. Hence, the $j$th basis vector $\basisf_j$ maximizes the Rayleigh quotient
\begin{equation}
\CR(\basisf) = \frac{\langle \basisf, \hessian \basisf\rangle}{ \langle \basisf,  \prcov^{-1} \basisf \rangle}
\label{eq:ray_quo_u}
\end{equation}
over the subspace $\paramspace \setminus \myspan\{\basisf_1, \ldots, \basisf_{j-1}\}$. This Rayleigh quotient helps interpret the $(u_i)$ as directions where the data are most ``informative'' relative to the prior. For example, consider a direction $w \in \paramspace$ representing a rough mode in the parameter space. If the prior is smoothing, then the denominator of (\ref{eq:ray_quo_u}) will be large; also, if the forward model output is relatively insensitive to variation in the $w$ direction, the numerator of (\ref{eq:ray_quo_u}) will be small. Thus the Rayleigh quotient will be small and $w$ is not particularly data-informed relative to the prior. Conversely, if $w$ is smooth then the prior variance in this direction may be large and the likelihood may be relatively constraining; this direction is then data-informed. Of course, there are countless intermediate cases, but in general, directions for which (\ref{eq:ray_quo_u}) are large will lie in the range of $U_r$.

Note also that $U_r$ diagonalizes both $H$ and $\prcov^{-1}$. We are particularly interested in the latter property: the modes $(u_i)$ are orthogonal (and can be chosen orthonormal) with respect to the inner product induced by the prior precision matrix. This property will be preserved later in the nonlinear case, and will be important to our posterior sampling schemes. 

\medskip

For nonlinear inverse problems, we seek an approximation of the
posterior distribution in the same form as
\eqref{eq:approx_post_linear}. In particular, the range of the
projector will be determined by blending together local
likelihood-informed subspaces from regions of high posterior
probability. The construction of the approximation will be detailed in
the following section.

\subsection{LIS construction for nonlinear inverse problems}

When the forward model is nonlinear, the Hessian of the data-misfit function varies over the parameter space, and thus the likelihood-informed directions are embedded in some nonlinear manifold.
We aim to construct a global linear subspace to capture the majority of this nonlinear likelihood-informed manifold.

Let the forward model $\forward(\param)$ be first-order differentiable. The linearization
of the forward model at a given parameter value $\param$, $
\linear(\param) \equiv \nabla \forward(\param)$ where $\linear(\param) \in \real^{d \times n}$, provides the local sensitivity of the parameter-to-observable map.
Inspired by the dimension reduction approach for the linear inverse problem, we use the linearized forward model $ \linear(\param)$ to construct the Gauss-Newton approximation of the Hessian of the data-misfit function, $\hessian(\param) = \linear(\param) ^\top \obscov^{-1} \linear(\param)$.
Now consider a local version of the Rayleigh quotient
(\ref{eq:ray_quo_u}), 
$$
\CR(\basisf;  \param) := \frac{\langle \basisf, \hessian(\param) \basisf\rangle}{ \langle \basisf,  \prcov^{-1} \basisf \rangle} .
$$
Introducing the change of variable $v = \chol^{-1} u$, we can equivalently use
\begin{equation}
\widetilde{\CR}(\basisfv; \param) := \frac{\langle \basisfv, (\chol^\top \hessian(\param) \chol) \basisfv\rangle}{ \langle \basisfv,
  \basisfv \rangle} = \CR(\chol \basisfv; \param), 
\label{eq:ray_quo_x}
\end{equation}
to quantify the local impact of the likelihood relative to the prior. 
As in the linear problem, this suggests the following procedure for computing a local LIS given some truncation threshold $\tau_{loc}$:
\begin{problem}[Construction of the local likelihood-informed subspace]
\label{pro:l_lis}
Given the Gauss-Newton Hessian of the data misfit function $\hessian(\param)$ at a given $\param$, find the eigendecompostion of the prior-preconditioned Gauss-Newton Hessian (ppGNH) 
\begin{equation}
\chol^\top \hessian(\param) \chol \basisfv_i = \lambda_i \basisfv_i.
\label{eq:eig}
\end{equation}
Given a truncation threshold $\tau_{loc} > 0$, the local LIS is spanned by $\basis_l = [\basisf_1, \ldots, \basisf_l]$, where $\basisf_i = L \basisfv_i$ corresponds to the $l$ leading eigenvalues such that $\lambda_1 \geq \lambda_2 \geq \ldots \geq \lambda_l \geq \tau_{loc}$.
\end{problem}

For a direction $u$ with $\CR( u; x) = 1$, the local impact of the likelihood and the prior are balanced. Thus, to retain a comprehensive set of likelihood-informed directions,  we typically choose a truncation threshold $\tau_{loc}$ less than 1. 

\medskip

To extend the pointwise criterion \eqref{eq:ray_quo_x} into a global criterion for likelihood-informed directions, we consider the expectation of the Rayleigh quotient over the posterior 
\[
\expectpos\left[\CR(\basisf; \param)\right] = \expectpos\left[ \widetilde{\CR}(\basisfv; \param)\right] = \frac{\langle \basisfv, S \basisfv\rangle}{ \langle \basisfv, \basisfv \rangle},
\]
where $S$ is the expected \ppgnh over the posterior,
\begin{equation}
S = \int_{\paramspace} \chol^\top H(\param) \chol \, \pi(d\param \vert \data).
\label{eq:expect_hessian}
\end{equation}
Then we can naturally construct the global LIS through the eigendecomposition of $S$ as in the linear case.
We consider approximating $S$ using the Monte Carlo estimator
\[
\widehat{ S }_n = \frac{1}{n} \sum_{k = 1}^{n} \chol^\top H(\param^{(k)}) \chol,
\]
where $\param^{(k)} \sim \pi(\param \vert \data)$, $k = 1 \ldots n$,  are posterior samples. 
Since the local Hessian $H(\param^{(k)})$ is usually not explicitly available and is not feasible to store for large-scale problems, we use its prior-preconditioned low-rank approximation as defined in Problem \ref{pro:l_lis}.
Thus the global LIS can be constructed by the following procedure:  
\begin{problem} [Construction of global likelihood-informed subspace]
\label{pro:g_lis}
Suppose we have a set of posterior samples $\mathcal{X} = \{ \param^{(k)} \}$, $k=1 \ldots m$, where for each sample $\param^{(k)}$, the \ppgnh is approximated by the truncated low rank eigendecomposition
\[
\chol^\top  H(\param^{(k)}) \chol \approx \sum_{i = 1}^{l(k)} \lambda_i^{(k)} v_i^{(k)} v_i^{(k)\top},
\]
by solving Problem \ref{pro:l_lis}. 
We have $\lambda_i^{(k)} \geq \tau_{loc}$ for all $k = 1 \ldots m$ and all $i = 1 \ldots l(k)$.
To construct the global LIS, we consider the eigendecompostion of the Monte Carlo estimator of the expected Hessian in (\ref{eq:expect_hessian}), which takes the form
\begin{equation}
\left( \frac{1}{m} \sum_{k = 1}^{m} \sum_{i = 1}^{l(k)} \lambda_i^{(k)} \basisfv_i^{(k)} \basisfv_i^{(k)\top} \right) \gbasisfv_j = \gamma_j \gbasisfv_j.
\label{eq:glis}
\end{equation}
The global LIS has the non-orthogonal basis $\gbasis_r = \chol \gbasisv_r$, where the eigenvectors $\gbasisv_r = [\gbasisfv_1, \ldots, \gbasisfv_r]$ correspond to the $r$ leading eigenvalues of  (\ref{eq:glis}), $\gamma_1 \geq \ldots \geq \gamma_{r} \geq \tau_{ g}$, for some truncation threshold $\tau_{g} > 0$.
Here we choose $\tau_{g}$ to be equal to the threshold $\tau_{ loc}$  in Problem \ref{pro:l_lis}.
\end{problem}

In many applications we can only access the Gauss-Newton Hessian by computing its action on vectors, which involves one forward model evaluation and one adjoint model evaluation. In such a case, the \ppgnh can be approximated by finding the eigendecomposition \eqref{eq:eig} using either Krylov subspace algorithms \cite{Golub_2012} or randomized algorithms \cite{SVD_HMT_2011, SVD_Liberty_etal_2007}. %

The number of samples required to construct the global LIS depends on
the degree to which $H(x)$ (or its dominant eigenspace) varies over
the posterior. In Section~\ref{sec:algo}, we present an adaptive
construction procedure that automatically explores the directional
landscape of the likelihood.

\subsection{Posterior approximation}
\label{sec:var_redu}

By projecting the likelihood function onto the global likelihood-informed subspace (LIS), we obtain the approximate posterior
\begin{equation}
\tilde{\pi}(\param \vert \data) \propto \pi\left(\data  \vert  \projlis \param\right) \pi_0(\param),
\label{eq:llkd_proj}
\end{equation}
where $\Pi_r$ is a projector onto the global LIS.
This projector is self-adjoint with respect to the inner product induced by the prior precision matrix, and leads to a natural decomposition of the parameter space as $\paramspace = \paramspace_r \oplus \paramspace_\perp$, where $\paramspace_r = \range(\projlis)$ is the LIS and $\paramspace_\perp = \range(I - \projlis)$ is the complement subspace (CS).
This choice leads to a factorization of the prior distribution into the product of two distributions, one defined on the low-dimensional LIS and the other on the CS. This factorization is the key to our dimension reduction technique. 

\begin{definition}
\label{def:proj}
We define the projectors $\projlis$ and $I - \projlis$, and a corresponding parameter decomposition, as follows: 
\begin{enumerate}[(a)]
\item Suppose the LIS basis computed in Problem~\ref{pro:g_lis} is $\gbasis_r = \chol \gbasisv_r$, where $\gbasisv_r$ is orthonormal. Define the matrix $\Xi_r = \chol^{-\top} \gbasisv_r^{} $ such that $\Xi_r^\top \gbasis_r^{} = I_r^{}$. 
The projector $\projlis$ has the form  
\[
\projlis^{} =  \gbasis_r^{} \Xi_r^\top.
\] 
Choose $\gbasisv_\perp$ such that $[\gbasisv_r \ \gbasisv_\perp]$ forms a complete orthonormal system in $\real^{n}$. Then the projector $I - \projlis$ can be written as 
\[
I - \projlis^{} = \gbasis_\perp^{} \Xi_\perp^\top,
\]
where $\gbasis_\perp = \chol \gbasisv_\perp$ and $\Xi_\perp = \chol^{-\top} \gbasisv_\perp^{}$. 
\item Naturally, the parameter $\param$ can be decomposed as %
\[
\param = \projlis \param + (I - \projlis) \param ,
\]
where each projection can be represented as the linear combination of the corresponding basis vectors.
Consider the ``LIS parameter'' $\param_r$ and the ``CS parameter'' $\param_\perp$, which are the weights associated with the LIS basis $\gbasis_r$ and CS basis $\gbasis_\perp$, respectively. 
Then we can define the following pair of linear transformations between the parameter $\param$ and $(\param_r, \param_\perp)$:
\begin{equation}
\param = \left[\gbasis_r \; \gbasis_\perp\right] \left[\begin{array}{l} \param_r \\ \param_\perp \end{array}\right], \quad 
\left[\begin{array}{l} \param_r \\ \param_\perp \end{array}\right] = \left[\Xi_r \; \Xi_\perp \right]^\top \param.
\label{eq:tran}
\end{equation}
\end{enumerate}
\end{definition}

Figure \ref{fig:tran} illustrates the transformations between the parameter projected onto the LIS, $\projlis \param$, and the LIS parameter $\param_r$. The same relation holds for the transformations between $(I - \projlis) \param$ and the CS parameter $\param_\perp$. And as Definition \ref{def:proj} makes clear, $\projlis$ is an oblique projector.
\begin{figure}[h!]
\centerline{\includegraphics[width=0.5\textwidth]{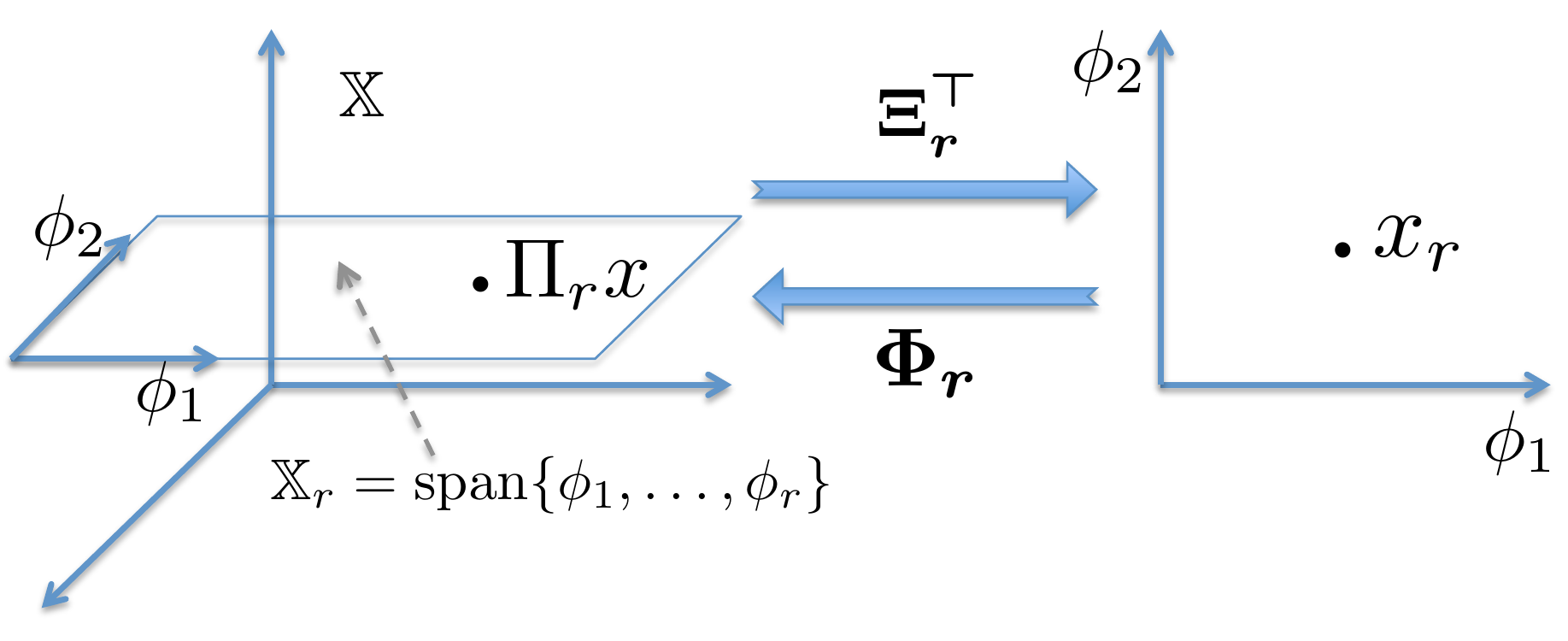}}
\caption{Illustration of the transformation between the parameter projected onto the LIS, $\projlis \param$, and the LIS parameter $\param_r$.}
\label{fig:tran}
\end{figure}

\begin{lemma}
\label{prior_fac}
Suppose we have $\param = \gbasis_r \param_r + \gbasis_\perp \param_\perp$ as defined in Definition \ref{def:proj}(b). Then the prior distribution can be decomposed as
\[
\pi_0(\param) = \pi_r(\param_r) \pi_\perp(\param_\perp),
\] 
where $\pi_r(\param_r) = \normal(\Xi_r^\top \prmean, I_r^{})$ and $\pi_\perp(\param_\perp) = \normal(\Xi_\perp^\top \prmean, I_\perp^{})$.
\end{lemma}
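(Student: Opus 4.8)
The plan is to change variables from $\param$ to $(\param_r, \param_\perp)$ using the linear map in \eqref{eq:tran} and show that the Gaussian prior density factors. Since $\param = \gbasis_r \param_r + \gbasis_\perp \param_\perp = \chol \gbasisv_r \param_r + \chol \gbasisv_\perp \param_\perp = \chol [\gbasisv_r \ \gbasisv_\perp] (\param_r; \param_\perp)$, and $[\gbasisv_r \ \gbasisv_\perp]$ is orthogonal by construction, the key structural fact is that the whitening transformation $v = \chol^{-1}(\param - \prmean)$ turns the prior into a standard Gaussian $\normal(0, I_n)$, and an orthogonal change of coordinates on a standard Gaussian leaves it standard. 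So the workflow is: (i) write the prior density $\pi_0(\param) \propto \exp(-\tfrac12 (\param-\prmean)^\top \prcov^{-1}(\param-\prmean))$; (ii) substitute $\param - \prmean = \chol[\gbasisv_r\ \gbasisv_\perp](\param_r;\param_\perp) - \prmean$ and simplify the quadratic form using $\prcov^{-1} = \chol^{-\top}\chol^{-1}$; (iii) observe the quadratic form separates into $\|\param_r - \Xi_r^\top\prmean\|^2 + \|\param_\perp - \Xi_\perp^\top\prmean\|^2$; (iv) check the Jacobian of the transformation is constant so the product of the two resulting Gaussian densities is (up to normalization) exactly $\pi_0$.

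In more detail for step (iii): writing $m_r = \Xi_r^\top \prmean = \gbasisv_r^\top \chol^{-1}\prmean$ and $m_\perp = \Xi_\perp^\top\prmean = \gbasisv_\perp^\top\chol^{-1}\prmean$, one has $\chol^{-1}(\param - \prmean) = \gbasisv_r \param_r + \gbasisv_\perp \param_\perp - \chol^{-1}\prmean$. Because $[\gbasisv_r\ \gbasisv_\perp]$ is a complete orthonormal system, $\chol^{-1}\prmean = \gbasisv_r m_r + \gbasisv_\perp m_\perp$, so $\chol^{-1}(\param-\prmean) = \gbasisv_r(\param_r - m_r) + \gbasisv_\perp(\param_\perp - m_\perp)$. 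Then
\[
(\param-\prmean)^\top\prcov^{-1}(\param-\prmean) = \|\chol^{-1}(\param-\prmean)\|^2 = \|\param_r - m_r\|^2 + \|\param_\perp - m_\perp\|^2,
\]
the cross term vanishing by orthonormality of the $\gbasisfv$ vectors. This is precisely the exponent of $\normal(m_r, I_r)\otimes\normal(m_\perp, I_\perp)$, giving the claimed factorization $\pi_0(\param) = \pi_r(\param_r)\pi_\perp(\param_\perp)$.

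I do not anticipate a genuine obstacle here; this is a computation rather than a theorem with a hard core. The one point requiring a little care is bookkeeping the distinction between the oblique projectors $\projlis, I-\projlis$ acting on $\paramspace$ (whose images are $\chol$-scaled, hence not orthogonal in the Euclidean sense) and the coordinate vectors $(\param_r,\param_\perp)$ living in the orthonormal $\gbasisv$-frame — it is the latter frame in which the Gaussian whitens, and one must be consistent about which objects the identities $\Xi_r^\top\gbasis_r = I_r$, $\Xi_\perp^\top\gbasis_\perp = I_\perp$, and $\Xi_r^\top\gbasis_\perp = 0$ are being used for. A secondary (cosmetic) point is that the density identity holds with the natural normalizing constants because $\det[\gbasisv_r\ \gbasisv_\perp] = \pm 1$ and $\det \chol$ cancels consistently; since the lemma is stated only up to the stated Gaussian forms, the constant can be absorbed without comment.
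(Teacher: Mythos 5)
Your proof is correct and is exactly the natural whitening argument: since $\gbasis_r=\chol\gbasisv_r$, $\gbasis_\perp=\chol\gbasisv_\perp$ with $[\gbasisv_r\ \gbasisv_\perp]$ orthonormal, the quadratic form $\|\chol^{-1}(\param-\prmean)\|^2$ separates into $\|\param_r-\Xi_r^\top\prmean\|^2+\|\param_\perp-\Xi_\perp^\top\prmean\|^2$, and the constant Jacobian accounts for the change of coordinates. The paper states the lemma without proof, and your computation (including the careful handling of the oblique projectors versus the orthonormal $\gbasisv$-frame and the absorbed normalizing constant) is precisely the argument it leaves implicit.
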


Following Definition \ref{def:proj}(b) and Lemma \ref{prior_fac}, the approximate posterior distribution \eqref{eq:llkd_proj} can be reformulated as
\begin{eqnarray*}
\tilde{\pi}(\param \vert \data) & \propto & \pi\left(\data  \vert  \projlis \param\right) \pi_r(\param_r) \pi_\perp(\param_\perp) \\
& = & \pi\left(\data  \vert  \gbasis_r \param_r \right) \pi_r(\param_r) \pi_\perp(\param_\perp).
\end{eqnarray*}
Applying the linear transformation from $\param$ to $(\param_r, \param_\perp)$ as defined in Equation \eqref{eq:tran}, we can rewrite the approximate posterior for the parameters $(\param_r, \param_\perp)$ as
\begin{equation}
\tilde{\pi}(\param_r, \param_\perp \vert \data) \propto \tilde{\pi}(\param \vert \data) \propto \tilde{\pi}(\param_r \vert \data) \pi_\perp(\param_\perp),
\label{eq:approx_pos}
\end{equation} 
which is the product of  the reduced posterior 
\begin{equation}
\tilde{\pi}(\param_r \vert \data) \propto \pi\left(\data \vert \gbasis_r \param_r\right) \pi_r(\param_r),
\label{eq:redu_pos}
\end{equation}
and the complement prior $\pi_\perp(\param_\perp)$. 
To compute a Monte Carlo estimate of the expectation of a function over the approximate posterior distribution
\eqref{eq:approx_pos}, we only need to sample the reduced posterior $\tilde{\pi}\left(\param_r  \vert  \data\right)$, since properties of the Gaussian complement prior $\pi_\perp(\param_\perp)$ are known analytically.

One can also combine MCMC samples from the reduced posterior $\tilde{\pi}\left(\param_r  \vert  \data\right)$ with independent samples from the complement prior $\pi_\perp (\param_\perp)$ to provide samples that are approximately drawn from the full posterior $\pi(\param \vert \data)$. By correcting these samples via importance weights or a Metropolis scheme, one would then obtain a sampling algorithm for the original full-space posterior. This idea is not pursued further here, and in the rest of this work we will emphasize the analytical properties of the complement prior $\pi_\perp(\param_\perp)$, using them to reduce the variance of Monte Carlo estimators.

\subsection{Reduced-variance estimators}

Suppose we have a function $h(\param)$ for which the conditional expectation over the approximate posterior \eqref{eq:llkd_proj}  
\begin{equation}
\expectapprox \left[h(\param)  \vert  \param_r\right] =
\int_{\paramspace_\perp} h(\gbasis_r \param_r + \gbasis_\perp \param_\perp ) \, \pi_0(\param_\perp) \, d \param_{\perp},
\label{eq:cond_exp}
\end{equation}
can be calculated either analytically or through some high-precision numerical quadrature scheme. Then variance reduction can be achieved as follows:
\begin{enumerate}
\item {\bf Subspace MCMC.} Use MCMC in the LIS to simulate a ``subspace Markov chain'' with target distribution $\tilde{\pi}(\param_r \vert \data)$ \eqref{eq:approx_pos}.  Any number of MCMC algorithms developed in the literature can be applied off-the-shelf, e.g., adaptive MCMC \cite{Haario_2001, AM_adapt_2006, Atchade_2006, Roberts_2007, Roberts_2009}, the stochastic Newton algorithm of \cite{Martin_2012}, or the Riemannian manifold algorithms of \cite{Girolami_2011}.
Since the dimension of the LIS can be quite small relative to the original parameter space, the subspace MCMC approach can yield lower sample correlations (better mixing) than applying any of these MCMC algorithms directly to the full posterior \eqref{eq:post}.  
\item {\bf Rao-Blackwellization.} We approximate $\expectpos[h] = \int_{\paramspace} h(\param) \pi(d\param \vert \data)$ by the expectation of the function $h(\param)$ over the approximate posterior $\tilde{\pi}(\param \vert \data)$, i.e., $\expectapprox[h] = \int_{\paramspace} h(\param) \tilde{\pi}(d\param \vert \data)$.
Given a set of subspace MCMC samples $\{\param_r^{(1)}, \ldots, \param_r^{(N)}\}$ where $\param_r^{(k)} \sim \tilde{\pi}_r(\param_r \vert \data)$, a Monte Carlo estimator of $\expectapprox[h]$ is given by
\begin{equation}
\MCRB = \frac{1}{N} \sum_{k=1}^{N} \expectapprox \left[h(\param)  \vert  \param_r^{(k)}\right].
\label{eq:RB_I}
\end{equation} 
As an application of the Rao-Blackwellization principle (see \cite{CR_1996} and references therein), the estimator \eqref{eq:RB_I} has a lower variance than the standard estimator
\begin{equation}
 \MC = \frac{1}{N} \sum_{k = 1}^{N}h(\param^{(k)}),
 \label{eq:standard_I}
\end{equation}
where $\param^{(k)} \sim \pi(\param \vert \data)$.
\end{enumerate}
This procedure mitigates many of the difficulties of posterior exploration in high dimensions, provided that the prior-to-posterior update is reasonably low rank. Variance reduction is achieved not only by increasing the effective sample size per MCMC iteration (via subspace MCMC), but also by reducing the variance of Monte Carlo estimators using Rao-Blackwellization. In effect, we argue that the Gaussian CS can be explored \textit{separately} and via a calculation \eqref{eq:cond_exp} that does not involve sampling. 

Note that while this procedure necessarily reduces the variance of a Monte Carlo estimator, it introduces bias since we replace the expectation over the full posterior $\expectpos[h]$ with an expectation over the approximate posterior $\expectapprox[h]$. Thus this variance reduction is particularly useful in situations where the variance of the estimator \eqref{eq:standard_I} derived from full-space MCMC samples is large compared with the bias, which is often the case for high-dimensional inverse problems.

Beyond variance reduction, subspace MCMC offers several additional computational advantages over MCMC methods applied to the full posterior directly: ({\romannumeral 1}) The storage requirement for saving subspace MCMC samples is much lower than that of an MCMC scheme that samples the full posterior. ({\romannumeral 2}) For MCMC methods where the proposal distribution involves operations with square root of the prior covariance matrix (e.g., the stochastic Newton \cite{Martin_2012} and preconditioned Crank-Nicolson \cite{BRSV_2008, CRSW_2012} techniques) the computational cost of handling the full prior covariance can be much higher than the computational cost of handling the reduced prior $\pi_r(\param_r)$, which has identity covariance. 

\medskip

The Monte Carlo estimator \eqref{eq:RB_I} can be further simplified if the function of interest $h(\param)$ can be expressed as either the product or the sum of two separate functions, $h_r(\param_r)$ and $h_\perp(\param_\perp)$, defined on the LIS and CS, respectively. 
In the multiplicative case $h(\param) = h_r(\param_r)h_\perp(\param_\perp)$, the conditional expectation \eqref{eq:cond_exp} can be written as
\[
\expectapprox \left[h(\param)  \vert  \param_r\right] = h_r(\param_r) \int_{\paramspace_\perp}  h_\perp(\param_\perp) \pi_0(d\param_\perp) .
\]
In the additive case $h(\param) = h_r(\param_r) + h_\perp(\param_\perp)$, it can be written as
\[
\expectapprox \left[h(\param)  \vert  \param_r\right] = h_r(\param_r) + \int_{\paramspace_\perp}  h_\perp(\param_\perp) \pi_0(d\param_\perp).
\]
Thus the expectation $\expectapprox[h]$ can be decomposed either into the product (in the multiplicative case) or the sum (in the additive case) of the pair of expectations
\begin{eqnarray}
\expectapprox[h_r] & = & \int_{\paramspace_r} h_r(\param_r) \pi(d\param_r \vert \data), \label{eq:exp_r} \\
\expectapprox[h_\perp] & = & \int_{\paramspace_\perp} h_\perp(\param_\perp) \pi_0(d\param_\perp), \label{eq:exp_perp}
\end{eqnarray}
which are associated with the LIS and CS, respectively. 
The expectation in \eqref{eq:exp_r} can be computed by the subspace MCMC methods described above, whereas the expectation in \eqref{eq:exp_perp} is computed analytically or through high-order numerical integration. 

Now we give two particularly useful examples of the analytical treatment of the complement space.
\begin{example}[Reduced variance estimator of the posterior mean] 
\label{ex:mean}
Suppose we have obtained the empirical posterior mean $\tilde{\mu}_r$ of the reduced parameter $\param_r$ using subspace MCMC. 
The resulting reduced-variance estimator of the posterior mean is
\begin{equation*}
\expectapprox[\param] = \gbasis_r \tilde{\mu}_r + \projcs \prmean = \gbasis_r \tilde{\mu}_r + (I - \projlis) \prmean.
\end{equation*}
\end{example}

\begin{example}[Reduced variance estimator of the posterior covariance] 
\label{ex:cov}
Suppose we have the empirical posterior covariance $\tilde{\Gamma}_r$ of the reduced parameter $\param_r$, estimated using subspace MCMC. 
The resulting reduced-variance estimator of the posterior covariance is
\begin{eqnarray*}
\mathbb{C}\mathrm{ov}_{\tilde{\pi}}[\param] & = & \gbasis_r^{} \tilde{\Gamma}_r^{} \gbasis_r^\top + \projcs^{} \prcov^{} \projcs^\top \nonumber \\
& = & \prcov + \gbasis_r \left( \tilde{\Gamma}_r^{} - I_r \right) \gbasis_r^\top .
\end{eqnarray*}
\end{example}

\subsection{Algorithms for the LIS}
\label{sec:algo}
Constructing the global LIS requires a set of posterior samples. Since the computational cost of solving Problem \ref{pro:l_lis} for any sample is much greater than the cost of evaluating the forward model, we wish to limit the number of samples used in Problem \ref{pro:g_lis} while ensuring that we adequately capture the posterior variation of the ppGNH. Thus we choose samples using the following adaptive procedure. 

\begin{algo}[Global LIS construction using subspace MCMC] 
\label{algo:subspace}
First, compute the posterior mode $\param_{map} \in \paramspace$. Set the initial sample set for Problem~\ref{pro:g_lis} to $\mathcal{X}^{(1)}= \{ \param_{map} \}$. Solve Problem~\ref{pro:g_lis} to find $\gbasisv_r^{(1)}$, the initial LIS basis $\gbasis_{r}^{(1)}$, and its left-inverse $\Xi_{r}^{(1)}$.\footnote{The dimension of the global LIS can vary at each iteration. Let $r(k)$ denote the dimension of the global LIS at iteration $k$. To be precise, we should then write $\gbasis_{r(k)}^{(k)}$ and $\Xi_{r(k)}^{(k)}$, but for brevity we will simplify notation to $\gbasis_{r}^{(k)}$ and $\Xi_{r}^{(k)}$ when possible.} 
Initialize a subspace Markov chain with initial state $\Xi_{r}^{(1)\top} \param_{map}$, which is the posterior mode projected onto the LIS. At any subsequent step $k \geq 1$, the following procedure is used to adaptively enrich the LIS:
\begin{enumerate}
\item \label{adapt:step1} {\bf Subchain simulation.}  Simulate the $r(k)$-dimensional subspace MCMC chain for $L$ iterations, so that the last state of this chain, denoted by $\theta$, is uncorrelated with its initial state. Then $\theta$ transformed back to the original parameter space, $(\gbasis_r^{(k)} \theta)$, is used as the next sample point. Enrich the sample set to $\mathcal{X}^{(k+1)} = \mathcal{X}^{(k)} \cup \{ \gbasis_r^{(k)} \theta \}$.

\item {\bf LIS construction.} Solve Problem \ref{pro:g_lis} with the sample set $\mathcal{X}^{(k+1)}$. Then update the LIS basis to $\gbasis_r^{(k+1)}$ and $\, \Xi_r^{(k+1)}$. Set the initial state of the next subspace MCMC chain to $\Xi_r^{(k+1)\top} \gbasis_r^{(k)} \theta$.

\item \label{step:conv} {\bf Convergence checking.} Terminate the adaptation if a pre-specified maximum allowable number of Hessian evaluations is exceeded, or if the weighted subspace distance in Definition \ref{def:conv} falls below a certain threshold. 
Otherwise, set $k \leftarrow k+1$ and return to Step (\ref{adapt:step1}).
\end{enumerate}
\end{algo}

The convergence criterion in step (\ref{step:conv}) is based on an incremental distance between likelihood-informed subspaces. The distance penalizes changes in the dominant directions (those with large eigenvalues $\gamma$) more heavily than changes in the less important directions (those with smaller $\gamma$). 

\begin{definition}[Weighted subspace distance]
\label{def:conv}
At iteration $k$, define the basis/weights pair $\mathcal{Y}^{(k)} = \{\gbasisv_r^{(k)}, D^{(k)}\} $, where $\gbasisv_r^{(k)}$ is the orthonormal LIS basis from Problem \ref{pro:g_lis} and $D^{(k)}_{ij} = \delta_{ij} (\hat{\gamma}_i^{(k)})^{\frac{1}{4}}$ is a diagonal matrix consisting of normalized weights  
\[
\hat{\gamma}_i^{(k)} = \frac{\gamma_i^{(k)}}{\sum_{j = 1}^{r(k)}\gamma_j^{(k)}}, \quad j = 1 \ldots r(k),
\]
computed from the eigenvalues $\{\gamma_1^{(k)}, \ldots, \gamma_{r(k)}^{(k)}\}$ of  Problem \ref{pro:g_lis}.
For two adjacent steps $k$ and $k+1$, we compute the weighted subspace distance of \cite{Distance_Li_2009}, which has the form
\begin{equation}
\mathcal{D}\left( \mathcal{Y}^{(k)} ,\mathcal{Y}^{(k+1)} \right) = \sqrt{
1 - \left\| \left( \gbasisv_{r(k)}^{(k)} D^{(k)} \right)^\top \left( \gbasisv_{r(k+1)}^{(k+1)} D^{(k+1)} \right) \right\|_{F}^2} \, .
\label{eq:lis_conv}
\end{equation}
\end{definition}

\medskip

Note that in Step (i) of Algorithm 8, we construct the global LIS by always sampling in an adaptively enriched subspace. This offers computational benefits, since the MCMC exploration is always confined to a lower dimensional space. However, a potential problem with this approach is that it might ignore some directions that are also data-informed. A more conservative approach would be to introduce a {\it conditional update} at the end of each subchain simulation: perform Metropolized independence sampling in the current CS using the complement prior as proposal. This would enable the subchain to explore the full posterior, but would result in higher-dimensional sampling when constructing the LIS. In our numerical examples, described below, no conditional updates were required for good performance; constructing the LIS using samples from the full posterior and using the subspace approach gave essentially the same results.
Of course, one could also simply employ a standard MCMC algorithm to sample the full posterior, and then construct the LIS using the resulting posterior samples. However, the efficiency of the MCMC algorithm in this case will be affected by the dimensionality of the problem.

\section{Example 1: Elliptic PDE}
\label{sec:elliptic}

Our first example is an elliptic PDE inverse problem used to demonstrate ({\romannumeral 1}) construction of the LIS and the impact of mesh refinement; ({\romannumeral 2}) the application of low-rank posterior mean and variance estimators; and ({\romannumeral 3}) changes in the LIS with varying amounts of observational data.

\subsection{Problem setup}
\label{sec:ellpitic_setup}
Consider the problem domain $\Omega = [0, 3]\times [0, 1]$, with boundary $\partial \Omega$. We denote the spatial coordinate by $s \in \Omega$.
Consider the permeability field $\kappa(s)$, the pressure field $p(s)$, and sink/source terms $f(s)$. 
The pressure field for a given permeability and source/sink configuration is governed by the Poisson equation 
\begin{equation}
\left\{ 
\begin{array}{lcll}
-\nabla \cdot \left( \kappa(s) \nabla p(s) \right) & = & f(s), & s \in \Omega \\
\langle \kappa(s) \nabla p(s), \vec{n}(s) \rangle & = & 0, & s \in \partial \Omega
\end{array}
\right.
\label{eq:forward_e}
\end{equation}
where $\vec{n}(s)$ is the outward normal vector on the boundary. To make a well-posed boundary value problem, a further boundary condition
\begin{equation}
\int_{\partial \Omega} p(s) ds = 0,
\label{eq:bnd}
\end{equation}
is imposed.
The source/sink term $f(s)$ is defined by the superposition of four weighted Gaussian plumes with standard deviation (i.e., spatial width) $0.05$, centered at four corners $[0, 0]$, $[3, 0]$, $[3, 1]$ and $[0, 1]$, with weights $\{1, 2, 3, -6\}$.
The system of equations (\ref{eq:forward_e}) is solved by the finite element method with $120\times 40$ bilinear elements. 

The discretized permeability field $\kappa$ is endowed with a log-normal prior distribution, i.e.,
\begin{equation}
\label{eq:prior_e}
\kappa = \exp(x), \; {\rm and} \; x \sim \normal\left(0, \prcov \right),
\end{equation}
where the covariance matrix $\prcov$ is defined through an anisotropic exponential covariance kernel
\begin{equation}
\label{eq:corr}
\mathbb{C}{\rm ov} \left (x(s), x(s^\prime) \right ) = 
\sigma_u^2 \exp \left( - \frac{\left( (s - s^{\prime})^\top \Sigma^{-1} (s-s^{\prime})\right)^{\frac12}}
{ s_0 } \right), 
\end{equation}
for $s, s^{\prime} \in \Omega$.
In this example, we set the anisotropic correlation tensor to
\[
\Sigma = \left[\begin{array}{rr} 0.55 & -0.45\\ -0.45 & 0.55 \end{array}\right],
\]
the prior standard deviation to $\sigma_u = 1.15$, and the correlation length to $s_0 = 0.18$.
The ``true'' permeability field is a realization from the prior distribution. The true permeability field, the sources/sinks, the simulated pressure field, and the synthetic data are shown in Figure \ref{fig:setup_e}.
\begin{figure}[h!]
\centerline{\includegraphics[width=\textwidth]{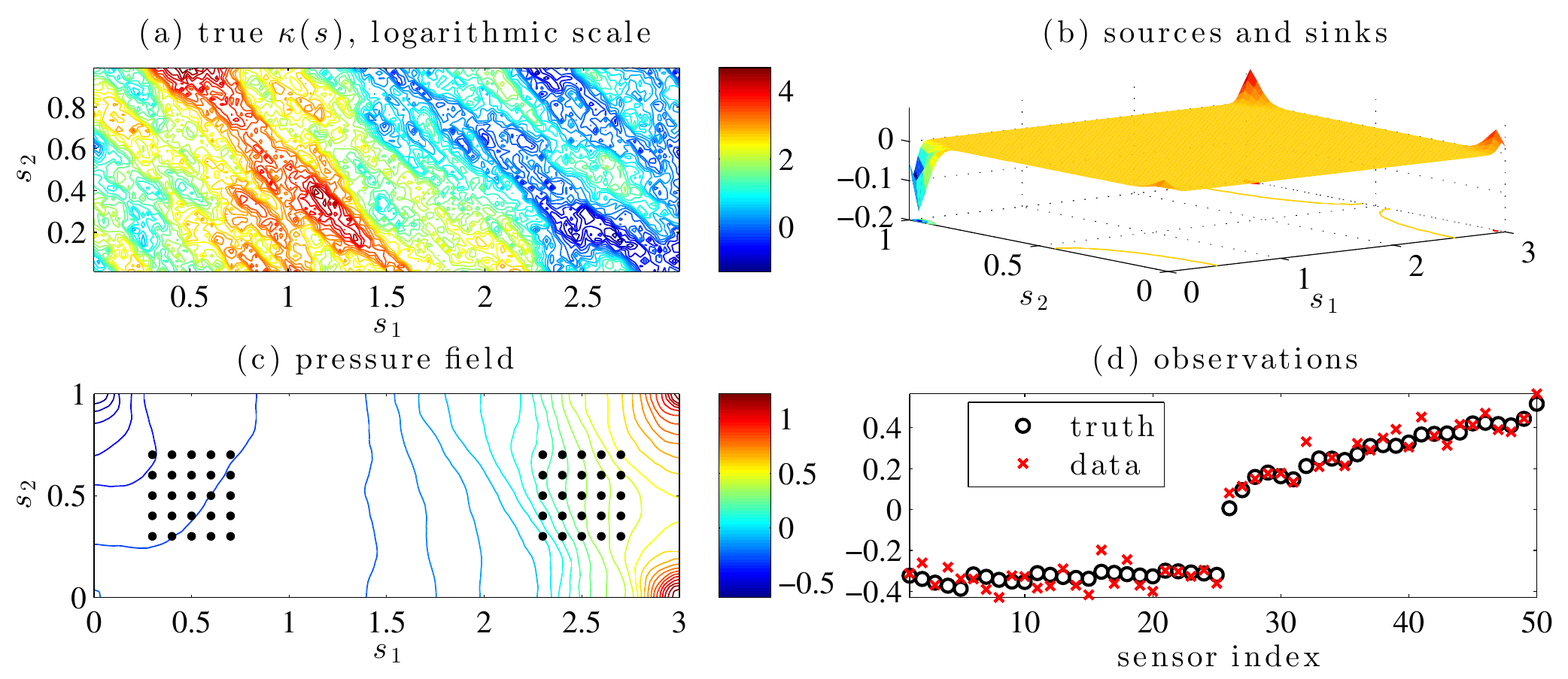}}
\caption{Setup of the elliptic inversion example. (a) ``True'' permeability field. (b) Sources and sinks. (c) Pressure field resulting from the true permeability field, with measurement sensors indicated by black circles. (d) Data $y$; circles represent the noise-free pressure at each sensor, while crosses represent the pressure observations corrupted with measurement noise.}
\label{fig:setup_e}
\end{figure}

\begin{figure}[h!]
\centerline{\includegraphics[width=0.9\textwidth]{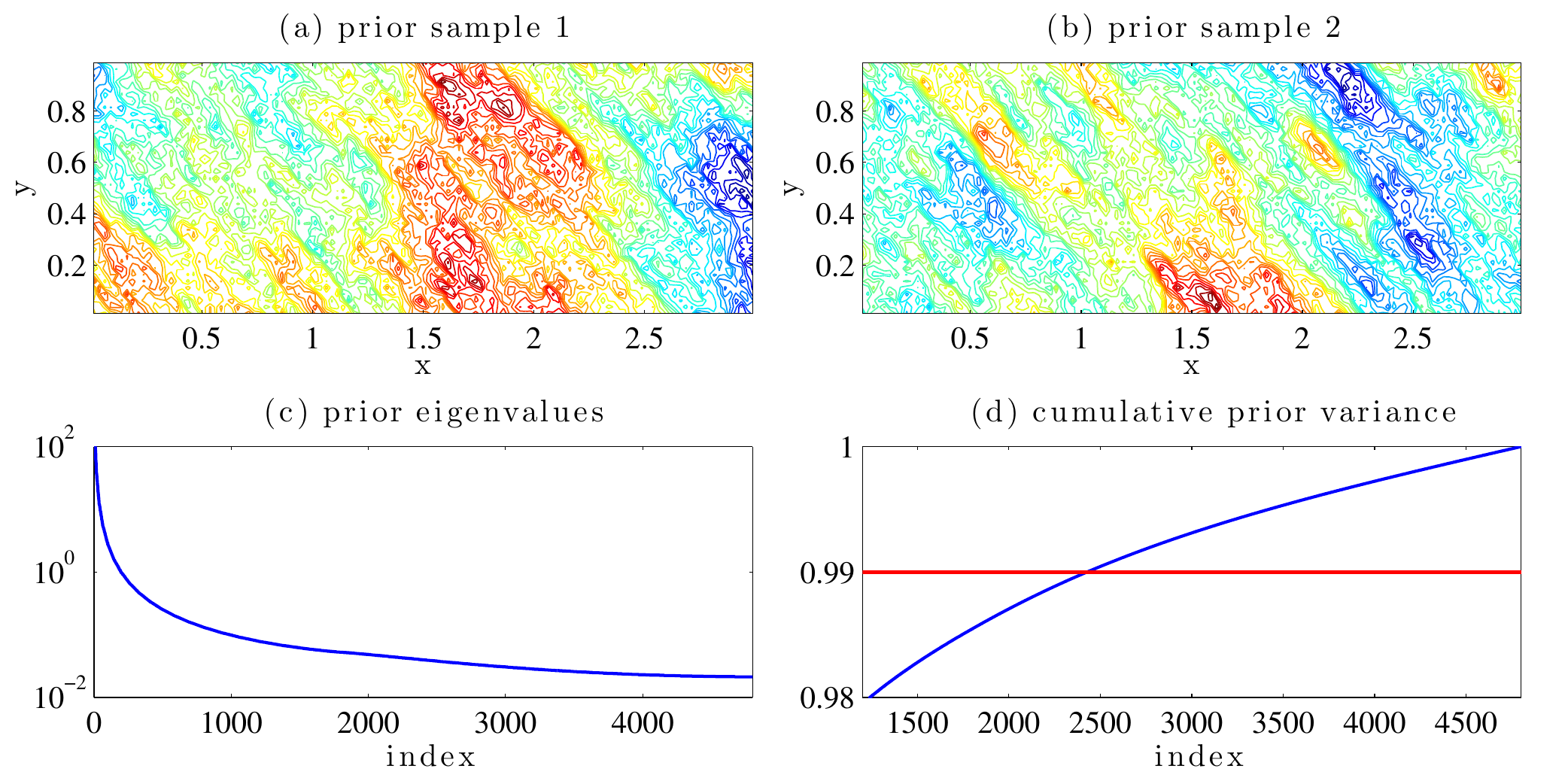}}
\caption{Prior samples and eigenspectrum of the prior covariance. (a) and (b): Two samples drawn from the prior. (c) Prior covariance spectrum, eigenvalues versus index number. (d) Cumulative energy (integrated prior variance) over a subset of the eigenspectrum, shown in blue; the red line represents the $99\%$ energy truncation threshold.}
\label{fig:prior_e}
\end{figure}

Partial observations of the pressure field are collected at $50$ measurement sensors as shown by the black dots in Figure \ref{fig:setup_e}(c).  The observation operator $M$ is simply the corresponding ``mask'' operation.  This yields observed data  $y \in \real^{50}$ as
\[
y = M p(s) + e , 
\]
with additive error $e \sim \normal(0, \sigma^2 I_{50})$.
The standard deviation $\sigma$  of the measurement noise is prescribed so that the observations have signal-to-noise ratio 10, where the signal-to-noise ratio is defined as $\max_s\{p(s)\}/\sigma$.  
The noisy data are shown in Figure \ref{fig:setup_e}(d).

Figure \ref{fig:prior_e} shows two draws from the prior distribution,  the eigenspectrum of the prior covariance, and the cumulative prior variance integrated over $\Omega$ (i.e., the running sum of the prior covariance eigenvalues).
In order to keep $99\%$ percent of the energy in the prior, $2427$ eigenmodes are required.
Because of this slow decay of the prior covariance spectrum, {\it a priori} dimension reduction based on a truncated eigendecomposition of the prior covariance (as described in \cite{Marzouk_2009}) would be very inefficient for this problem. Information carried in high-frequency eigenfunctions cannot be captured unless an enormous number of prior modes are retained; thus, a better basis is required.

\subsection{LIS construction}
\label{sec:elliptic_lis}

Now we demonstrate the process of LIS construction using Algorithm \ref{algo:subspace}, and the structure of the LIS under mesh refinement. 
To compute the LIS, we run Algorithm \ref{algo:subspace} for $500$ iterations, using adaptive MALA \cite{Atchade_2006} to simulate each subchain with length $L = 200$.
We choose the truncation thresholds $\tau_{loc} = \tau_{g} = 0.1$. 
To explore the dimensionality and structure of the LIS versus mesh refinement, we carry out the same numerical experiment on a $60\times 20$ coarse grid, a $120 \times 40$ intermediate grid, and a $180\times 60$ fine grid.
The dimension of the LIS versus number of iterations, the evolution of the convergence diagnostic \eqref{eq:lis_conv}, and the generalized eigenvalues after 500 iterations---for each level of grid refinement---are shown in Figure \ref{fig:SNR10_conv}. Also, Figure \ref{fig:SNR10_basis} shows the first five LIS basis vectors for each level of discretization.

\begin{figure}[h!]
\centerline{\includegraphics[width=0.8\textwidth]{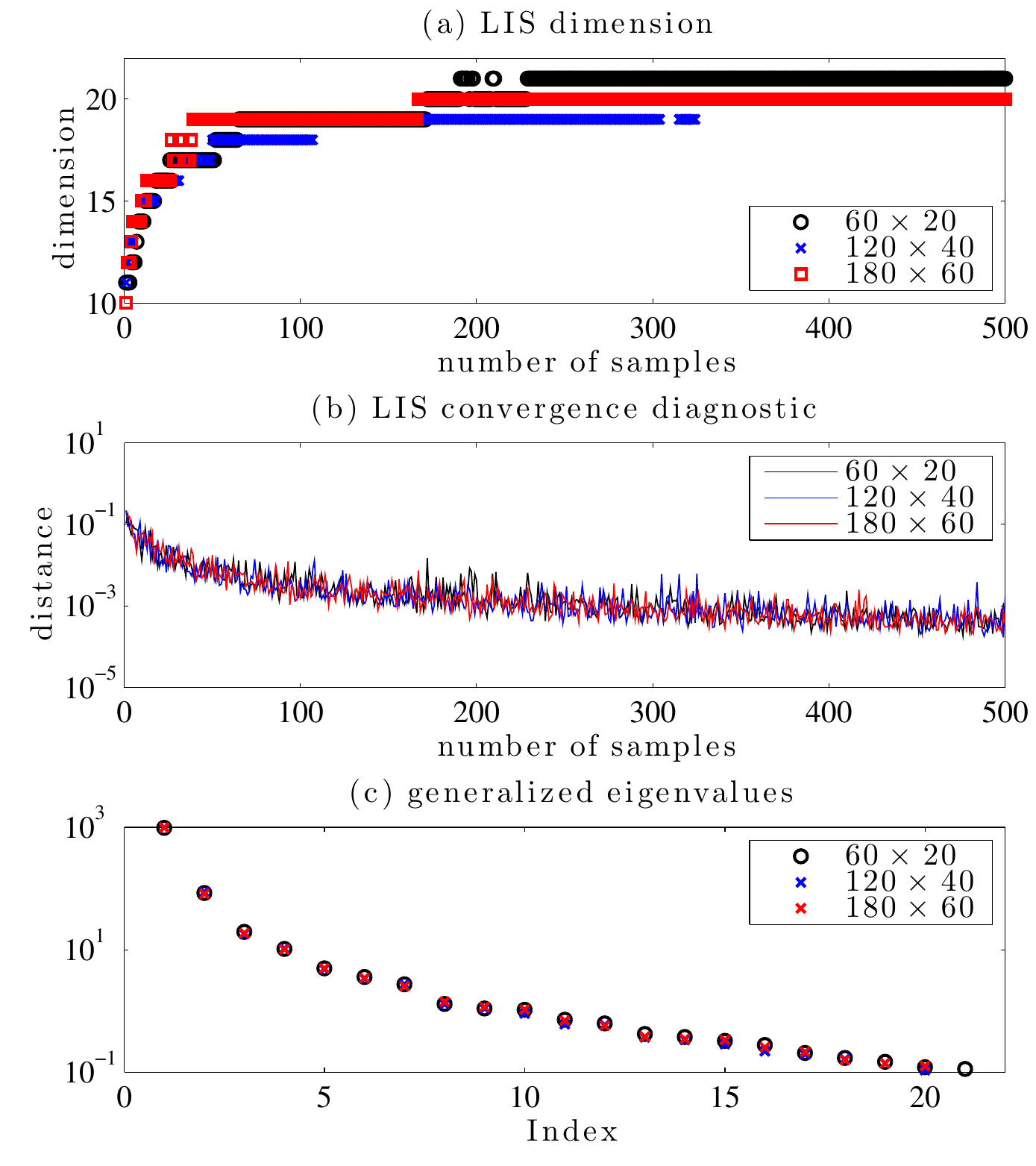}}
\caption{The dimension of the LIS and the convergence diagnostic \eqref{eq:lis_conv} versus the number of samples used in the adaptive construction. Black, blue, and red markers represent the $60\times 20$ grid, the $120\times 40$ grid, and the $180\times 60$ grid, respectively. Subplot (a) shows the dimension of the LIS; subplot (b) shows the weighted distance between successive subspaces; and subplot (c) shows the generalized eigenvalues $\gamma_i^{(k)}$ after $k=500$ iterations.}
\label{fig:SNR10_conv}
\end{figure}

\begin{figure}[h!]
\centerline{\includegraphics[width=\textwidth]{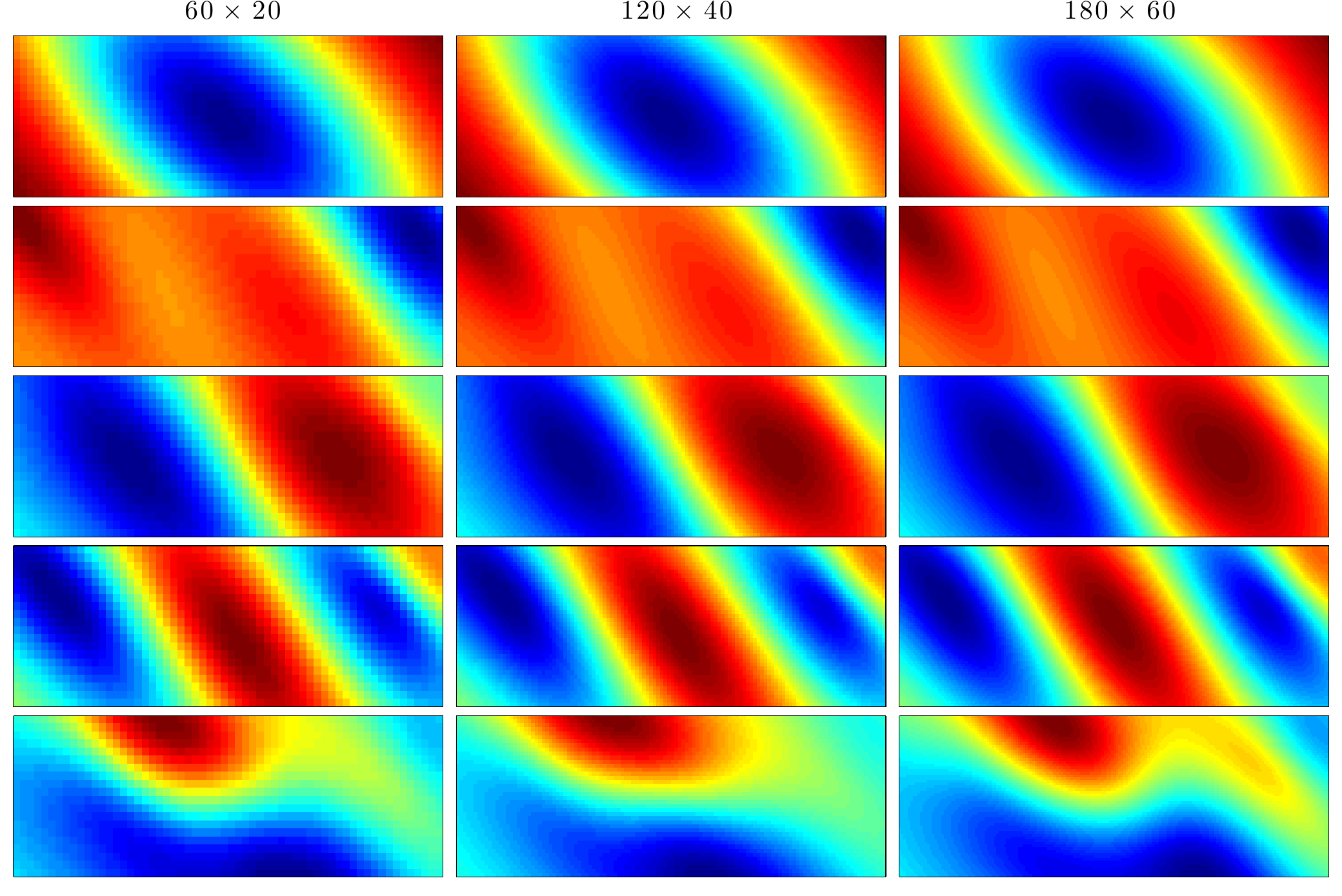}}
\caption{The first five LIS basis vectors (columns of $\gbasis_5$) for different levels of discretization of the inversion parameters $x$. In the figure, columns 1--3 correspond to the $60\times 20$ grid, the $120\times 40$ grid, and the $180\times 60$ grid, respectively. The basis vectors in each column are ordered top to bottom by decreasing eigenvalue.}
\label{fig:SNR10_basis}
\end{figure}

As shown in Figure \ref{fig:SNR10_conv}(a), the dimension of the LIS changes rapidly in the first 100 iterations, then it stabilizes. Change in the dimension reflects the fact that the log-likelihood Hessian $H(x)$ varies locally in this non-Gaussian problem. We also observe that the $60\times 20$ grid has a slightly larger final LIS dimension than the two refined grids: 
at the end of the adaptive construction, the LIS of the $60\times 20$ grid has dimension $21$, while the $120\times 40$ grid and the $180\times 60$ grid yield LIS dimensions of $20$.
This effect may be ascribed to larger discretization errors in the $60\times 20$ grid.

The weighted distance \eqref{eq:lis_conv} between each adjacent pair of likelihood-informed subspaces is used as the convergence diagnostic during the construction process.
With any of the three discretizations, the weighted subspace distance at the end of adaptive construction is several orders of magnitude lower than at the beginning, as shown in Figure \ref{fig:SNR10_conv}(b).
We also observe that the rates of convergence of this diagnostic are comparable for all three levels of discretization. These figures suggest that while local variation of the Hessian is important in this problem (e.g., the dimension of the LIS doubles over the course of the iterations), much of this variation is well-explored after 100 or 200 iterations of Algorithm \ref{algo:subspace}. 

Since the forward model converges with grid refinement, we expect that the associated LIS should also converge.  
The generalized eigenvalues for all three grids are shown in Figure \ref{fig:SNR10_conv}(c), where the spectra associated with all three subspaces have very similar values.
And as shown in Figure \ref{fig:SNR10_basis}, the leading LIS basis vectors $\{ \varphi_1, \ldots, \varphi_5\}$ have similar shapes for all three levels of grid refinement. Refinement leads to slightly more structure in $\varphi_5$, but the overall mode shapes are very close.

\subsection{Estimation of the posterior mean and variance}
\label{sec:elliptic_meanvar}

With an LIS in hand, we apply the variance reduction procedure described in Section \ref{sec:var_redu} to estimate the posterior mean and variance of the permeability field. Calculations in this subsection use the $120\times 40$ discretization of the PDE and inversion parameters.  

We first demonstrate the sampling performance of subspace MCMC, where we use adaptive MALA \cite{Atchade_2006} to sample the LIS-defined reduced posterior $\tilde{\pi}( x_r \vert y)$ \eqref{eq:redu_pos}.
We compare the results of subspace MCMC with the results of Hessian-preconditioned Langevin MCMC 
applied to the full posterior $\pi(x \vert y)$ \eqref{eq:post} (referred to as ``full-space MCMC'' hereafter).
The latter MCMC scheme results from an explicit discretization of the Langevin SDE, preconditioned by the inverse of the log-posterior Hessian evaluated at the posterior mode (see \cite{CLM_2014} for details). 
Note that we cannot precondition the full-dimensional Langevin SDE by the empirical posterior covariance as in adaptive MALA because of the high parameter dimension ($n=4800$). In this setup, subspace MCMC and full-space MCMC require the same number of forward model and gradient evaluations for a given number of MCMC iterations.

To examine sampling performance, the autocorrelation of the log-likelihood function and the autocorrelations of the parameters projected onto the first, third, and fifth LIS basis vectors are used as benchmarks. These results are shown in Figure \ref{fig:elliptic_auto}. We run both algorithms for $10^{6}$ iterations and discard the first half of the chains as burn-in.
The top row of Figure \ref{fig:elliptic_auto} shows these benchmarks for both samplers. For all four benchmarks, subspace MCMC produces a faster decay of autocorrelation as a function of sample lag---i.e., a lower correlation between samples after any given number of MCMC steps. %

Furthermore, as discussed in Section \ref{sec:var_redu}, even though the same number of forward model evaluations are required by subspace MCMC and full-space MCMC for a given number of samples, the computational cost of operations involving the square root of the prior covariance---used in sampling and evaluating the proposal distribution---can be much higher for full-space MCMC than subspace MCMC. In this test case, running subspace MCMC for $10^6$ iterations cost $2.1\times 10^4$ seconds of CPU time, while running full-space MCMC for the same number of iterations took $2.6\times10^5$ seconds. 
To incorporate this cost difference, the second row of Figure \ref{fig:elliptic_auto} shows the autocorrelation of the four benchmark quantities as a function of CPU time rather than sample lag. Here, we immediately observe that the autocorrelation per CPU time is further reduced by using subspace MCMC. 

Of course, recall that to construct the LIS we simulated Algorithm \ref{algo:subspace} for $500$ iterations. This costs roughly $2.2\times 10^4$ seconds of CPU time, which is only $8.5\%$ of the time required to run full-space MCMC for $10^6$ steps.
Therefore subspace MCMC, including the cost of LIS construction, takes less time to produce a given number of samples than full-space MCMC \textit{and} these samples are less correlated---i.e., of higher quality. 

\begin{figure}[h!]
\centerline{\includegraphics[width=\textwidth]{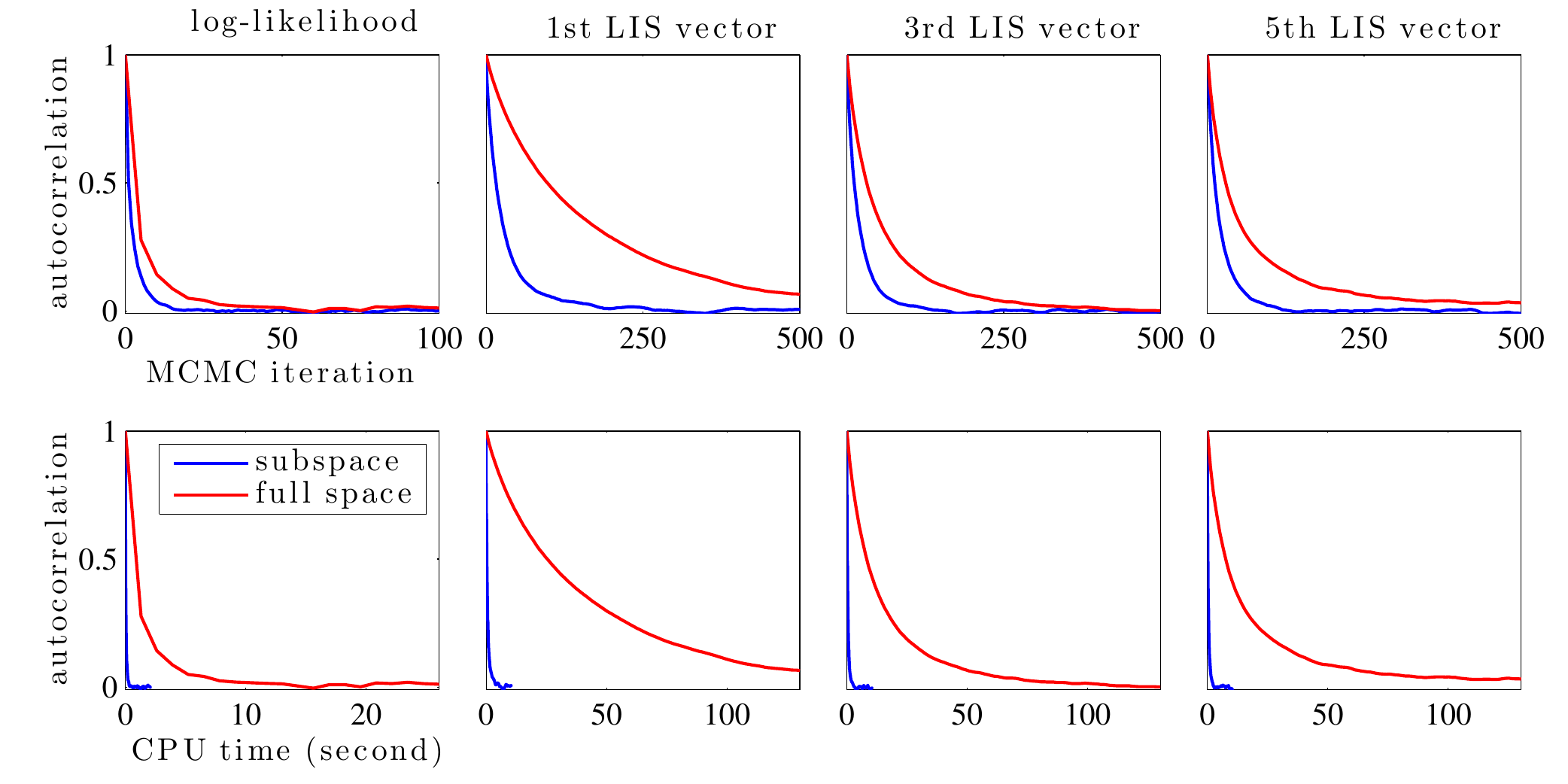}}
\caption{Autocorrelations of various benchmarks: blue line is subspace MCMC and red line is full-space MCMC. Column 1: log-likelihood function. Column 2: parameters projected onto the first LIS basis vector.  Column 3: parameters projected onto the third LIS basis vector.  Column 4: parameters projected onto the fifth LIS basis vector.  Top row: Autocorrelation as a function of sample lag. Bottom row: Autocorrelation as a function of sample lag, where the latter is measured via CPU time.}
\label{fig:elliptic_auto}
\end{figure}

\begin{figure}[h!]
\centerline{\includegraphics[width=\textwidth]{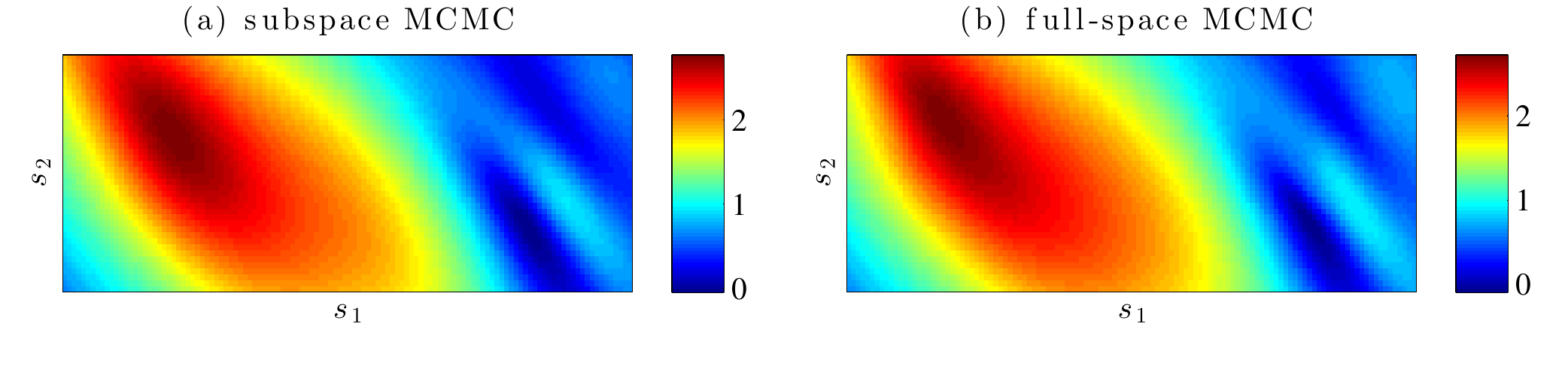}}
\caption{Estimates of posterior mean: (a) using subspace MCMC, (b) using full-space MCMC. }
\label{fig:elliptic_post_mean}
\end{figure}

\begin{figure}[h!]
\centerline{\includegraphics[width=\textwidth]{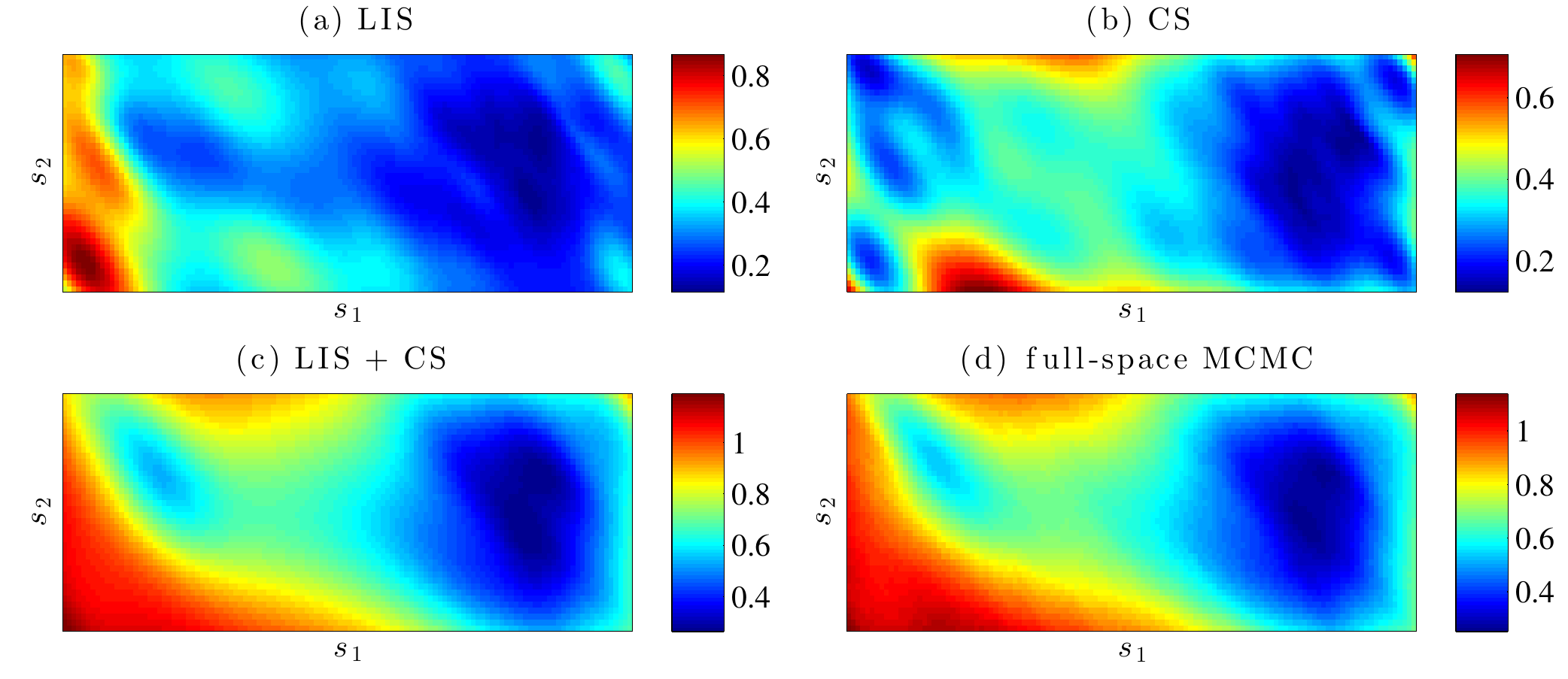}}
\caption{Estimation of the posterior variance: (a) empirical estimate using MCMC in the LIS; (b) analytical evaluation in the CS; (c) combined LIS + CS estimate; (d) for comparison, estimation using full-space MCMC. }
\label{fig:elliptic_post_var}
\end{figure}

We now compare reduced-variance estimates of the posterior mean and variance (obtained with subspace MCMC) with estimates obtained via full-space MCMC. 
The results are shown in Figures \ref{fig:elliptic_post_mean} and \ref{fig:elliptic_post_var}.
Full-space MCMC and subspace MCMC yield very similar mean and variance estimates.
Figures \ref{fig:elliptic_post_var}(a) and (b) distinguish the two components of the Rao-Blackwellized variance estimates described in Example~\ref{ex:cov}. Variance in the LIS, shown in Figure~\ref{fig:elliptic_post_var}(a), is estimated from MCMC samples, while variance in the CS, shown in Figure~\ref{fig:elliptic_post_var}(b), is calculated analytically from the prior and the LIS projector. The sum of these two variance fields is shown in Figure~\ref{fig:elliptic_post_var}(c), and it is nearly the same as the full-space result in Figure~\ref{fig:elliptic_post_var}(d). 
In the central part of the domain where measurement sensors are not installed, we can observe that the variance is larger in the CS than in the LIS, and hence this part of the domain is prior-dominated. 
In the right part of the domain, the variance is less prior-dominated, since this region is covered by observations.

\subsection{The influence of data}

The amount of information carried in the data affects the dimension and structure of the LIS. 
To demonstrate the impact of the data, we design a case study where different likelihood-informed subspaces are constructed under various observational scenarios. The same stationary groundwater problem defined in Section \ref{sec:ellpitic_setup} is employed here. 
For the sake of computational efficiency, the problem domain $\Omega = [0, 3]\times [0, 1]$ is discretized by a slightly coarser $72\times 24$ mesh. And to provide a stronger impulse to the groundwater system, the source/sink terms used in this example are different from those used in Sections \ref{sec:ellpitic_setup}--\ref{sec:elliptic_meanvar}.
Along the boundary of the domain $\Omega$, we evenly distribute a set of sources with a distance of $0.5$ between the source centers.
Two sinks are placed in the interior of the domain at locations $[0.5, 1]$ and $[2, 0.5]$.
Each source has weight $1$, while each sink has weight $3.5$.
We distributed sensors evenly over the domain $[0, 1]\times [0, 1] \cup [2, 3]\times [0, 1]$; starting with an inter-sensor spacing of $1/3$, we incrementally refine the sensor distribution with spacings of  $1/6$, $1/12$, and $1/24$. 
This results in four different data sets, containing the noisy readings of $32$, $98$, $338$, and $1250$ sensors, respectively. 
The true permeability field, the sources/sinks, the simulated pressure field, and sensor distributions are shown in Figure \ref{fig:setup_data_ref}.
\begin{figure}[h!]
\centerline{\includegraphics[width=\textwidth]{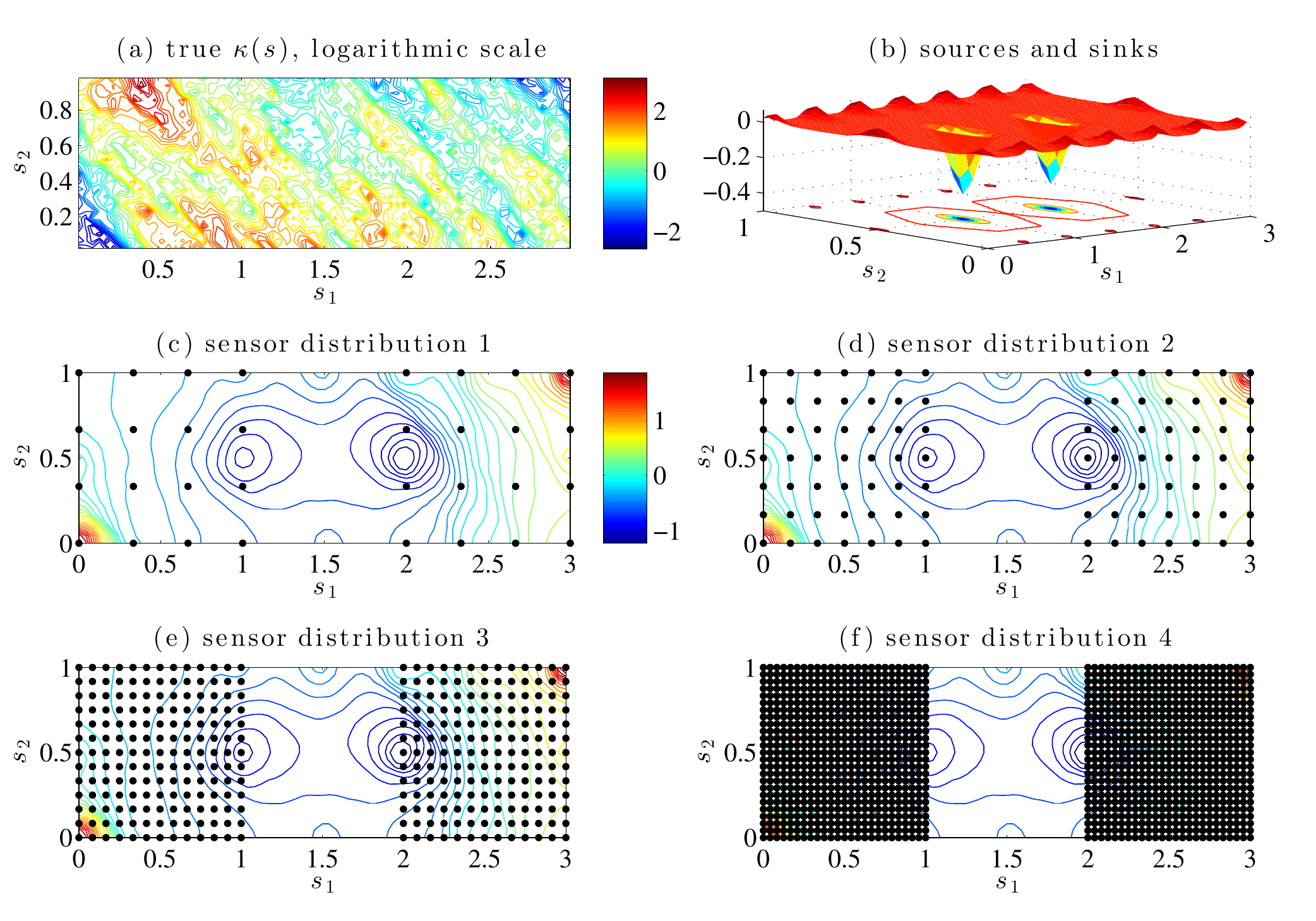}}
\caption{Setup of the elliptic inversion example for testing the influence of data. (a) True permeability field. (b) Sources and sinks. (c)--(f) Pressure field resulting from the permeability field defined in (a), and sensor distributions (black dots) for data sets 1--4.}
\label{fig:setup_data_ref}
\end{figure}

As in Section~\ref{sec:elliptic_lis}, we run Algorithm \ref{algo:subspace} for $500$ iterations to construct the LIS, using subchains of length $L = 200$. For data sets 1--4, the resulting LISs have dimension $24$, $34$, $50$, and $83$, respectively.
The generalized eigenvalue spectrum for each LIS is shown in Figure \ref{fig:data_ref_eig}. We note that the eigenvalues decay more slowly with increasing amounts of data. This behavior is expected; since the generalized eigenvalues reflect the impact of the likelihood, relative to the prior, more data should lead to more directions where the likelihood dominates the prior.

Since the sensors for all four data sets occupy the same area of the spatial domain, we expect that the four likelihood-informed subspaces should share a similar low frequency structure. However, the high frequency structures in each LIS might differ from each other under refinement of the sensor distribution. Thus the LIS basis vectors corresponding to the largest eigenvalues should share a similar pattern, while the LIS basis vectors corresponding to the relatively small eigenvalues might have different patterns. We observe this effect in the numerical experiments carried out here; Figure \ref{fig:data_ref_basis} shows the first and fifteenth LIS basis vector for each of the data sets.

\begin{figure}[h!]
\centerline{\includegraphics[width=0.5\textwidth]{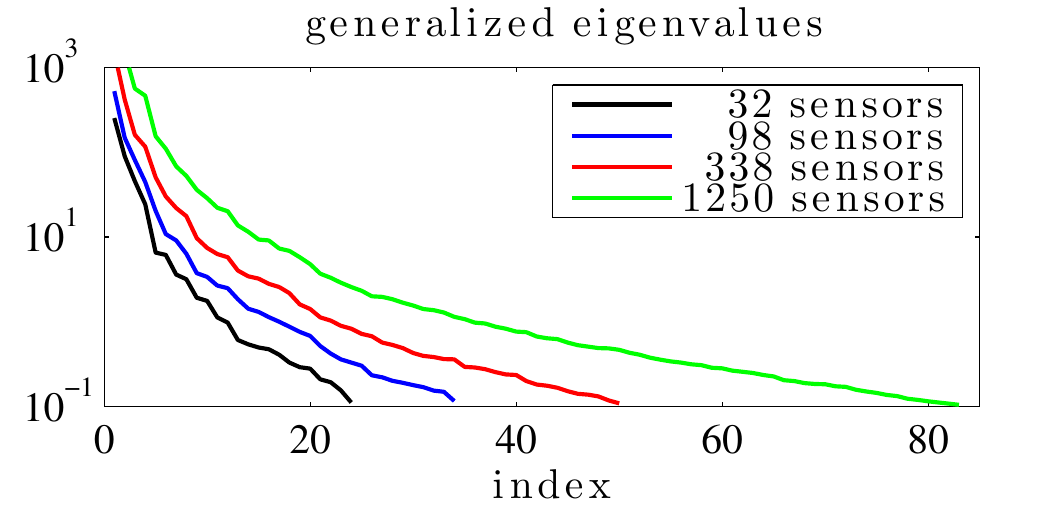}}
\caption{Generalized eigenvalues associated with the likelihood-informed subspace under refinement of the observations. The black, blue, red, and green lines show eigenvalues for data sets 1--4, with 32, 98, 338, and 1250 sensors, respectively.}
\label{fig:data_ref_eig}
\end{figure}

\begin{figure}[h!]
\centerline{\includegraphics[width=\textwidth]{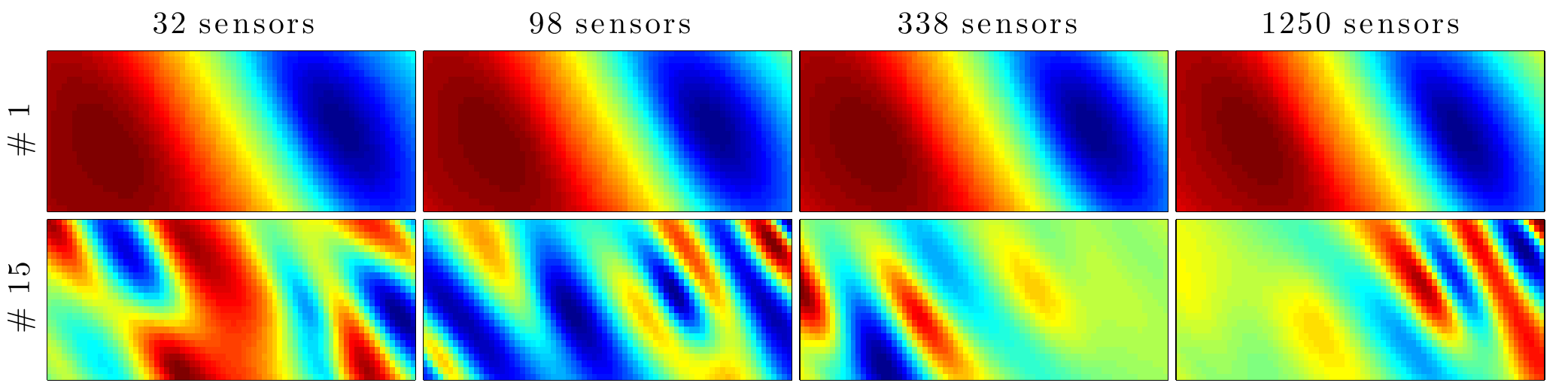}}
\caption{The first and fifteenth LIS basis vectors for each of the four data sets.}
\label{fig:data_ref_basis}
\end{figure}

\section{Example 2: atmospheric remote sensing}
\label{sec:remote}

In this section, we apply the dimension reduction approach to a realistic atmospheric satellite remote sensing problem. The problem is to invert the concentrations of various gases in the atmosphere using the measurement system applied in the GOMOS satellite instrument, which stands for \textit{Global Ozone MOnitoring System}.

GOMOS is an instrument on board ESA's Envisat satellite, and was operational for about 10 years before the connection with the satellite was lost in May 2012. The GOMOS instrument performs so-called star occultation measurements; it measures, at different wavelengths, the absorption of starlight as it travels through the atmosphere. Different gases in the atmosphere (such as ozone, nitrogen dioxide and aerosols) leave fingerprints in the measured intensity spectra. The task of the inversion algorithm is to infer the concentrations of these gases based on the measurements. %

The GOMOS inverse problem is known to be ill-posed; the intensity spectra may contain strong information about the major gases (like O$_3$) at some altitudes, whereas some minor gases (like aerosols) at some altitudes may be practically unidentifiable and totally described by the prior. Thus, the GOMOS problem is a good candidate for our approach: the dimension of the likelihood informed subspace is expected to be small and the prior contribution large. 

Next, we briefly present the GOMOS theory and the inverse problem setup. For more details about the GOMOS instrument and the Bayesian treatment of the inverse problem, see \cite{Haario_2004} and the references therein.

\subsection{The GOMOS model}

The GOMOS instrument repeatedly measures light intensities $I_\lambda$ at different wavelengths $\lambda$. First, a reference intensity spectrum $I_{\mathrm{ref}}$ is measured above the atmosphere. The so-called transmission spectrum is defined as $T_\lambda=I_\lambda / I_{\mathrm{ref}}$. The transmissions measured at wavelength $\lambda$ along the ray path $l$ are modelled using Beer's law:
\begin{equation}
T_{\lambda,l}=\exp \left( - \int_l \sum_{\mathrm{gas}} \alpha_\lambda^{\mathrm{gas}}(z(s)) \rho^{\mathrm{gas}}(z(s))ds \right),
\label{mod}
\end{equation}
where $\rho^{\mathrm{gas}}(z(s))$ is the density of a gas (unknown parameter) at tangential height $z$. The so called cross-sections $\alpha_\lambda^{\mathrm{gas}}$, known from laboratory measurements, define how much a gas absorbs light at a given wavelength. 

To approximate the integrals in (\ref{mod}), the atmosphere is discretized. The geometry used for inversion resembles an onion: the gas densities are assumed to be constant within spherical layers around the Earth. The GOMOS measurement principle is illustrated in Figure \ref{fig:gomos_setting} below. 

\begin{figure}[h!]
\centerline{\includegraphics[width=0.9\textwidth]{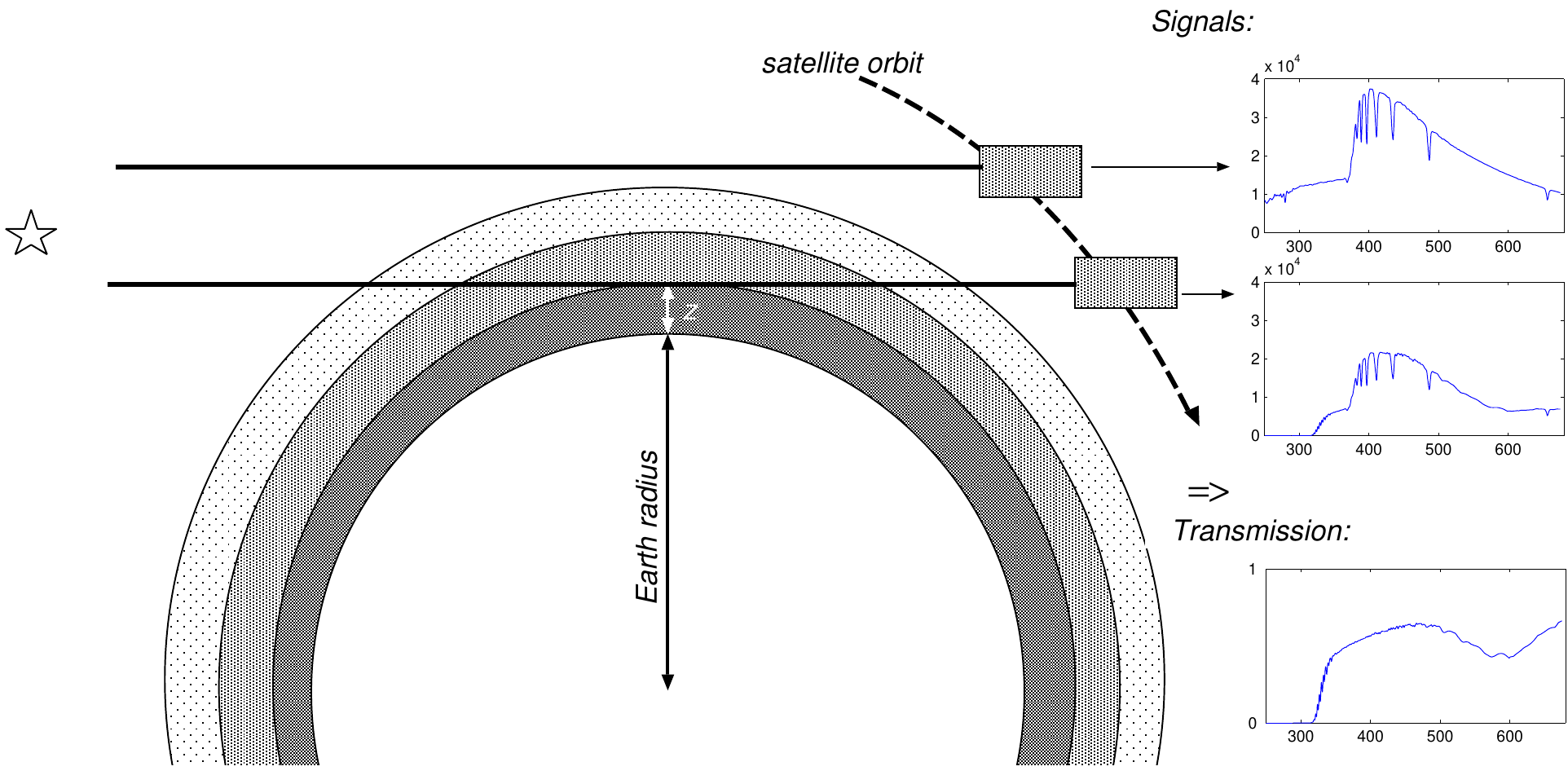}}
\caption{The principle of the GOMOS measurement. The reference intensity is measured above the atmosphere. The observed transmission spectrum is the attenuated spectrum (measured through the atmosphere) divided by the reference spectrum. The atmosphere is presented locally as spherical layers around the Earth. Note that the thickness of the layers is much larger relative to the Earth in this figure than in reality. The figure is adopted from \cite{Haario_2004}, with the permission of the authors.}
\label{fig:gomos_setting}
\end{figure}

Here, we assume that the cross-sections do not depend on height. In the inverse problem we have $n_{\mathrm{gas}}$ gases, $n_{\lambda}$ wavelengths, and the atmosphere is divided into $n_{\mathrm{alts}}$ layers. The discretisation is fixed so that number of measurement lines is equal to the number of layers. Approximating the integrals by sums in the chosen grid, and combining information from all lines and all wavelengths, we can write the model in matrix form as follows:
\begin{equation}
T=\exp(-CB^\top A^\top), 
\end{equation}
where $T \in \mathbb{R}^{n_{\lambda} \times n_{\mathrm{alts}}}$ are the modelled transmissions, $C \in \mathbb{R}^{n_{\lambda} \times n_{\mathrm{gas}}}$ contains the cross-sections, $B \in \mathbb{R}^{n_{\mathrm{alts}} \times n_{\mathrm{gas}}}$ contains the unknown densities and $A \in \mathbb{R}^{n_{\mathrm{alts}} \times n_{\mathrm{alts}}}$ is the geometry matrix that contains the lengths of the lines of sight in each layer. %

Computationally, it is convenient to deal with vectors of unknowns. We vectorize the above model using the identity $\mathrm{vec}(CB^\top A^\top)=(A \otimes C) \mathrm{vec}(B^\top)$, where $\otimes$ denotes the Kronecker product and $\mathrm{vec}$ is the standard vectorization obtained by stacking the columns of the matrix argument on top of each other. Thus, the likelihood model is written in vector form as follows:
\begin{equation}
\data = \mathrm{vec}(T)+\error = \exp \left( -(A \otimes C) \mathrm{vec}(B^\top) \right)+\error,
\label{modvec}
\end{equation}
where $\error$ is the measurement error, for which we apply an independent Gaussian model with known variances.

Note that, in principle, the model (\ref{modvec}) could be linearized by taking logarithms of both sides, which is usually done for such tomography problems (like X-ray computerized tomography). For this problem, linearisation can cause problems, since the signal from the star is often smaller compared to the background noise in the measurement. 

\subsection{Data and prior}

Here, we generate synthetic data by solving the forward model (\ref{modvec}) with known gas densities $x$. In the example, we have 4 gas profiles to be inverted. The atmosphere is discretized into 50 layers, and the total dimension of the problem is thus 200. The simulated data are illustrated in Figure \ref{fig:gomos_data}.

\begin{figure}[h!]
\centerline{\includegraphics[width=\textwidth]{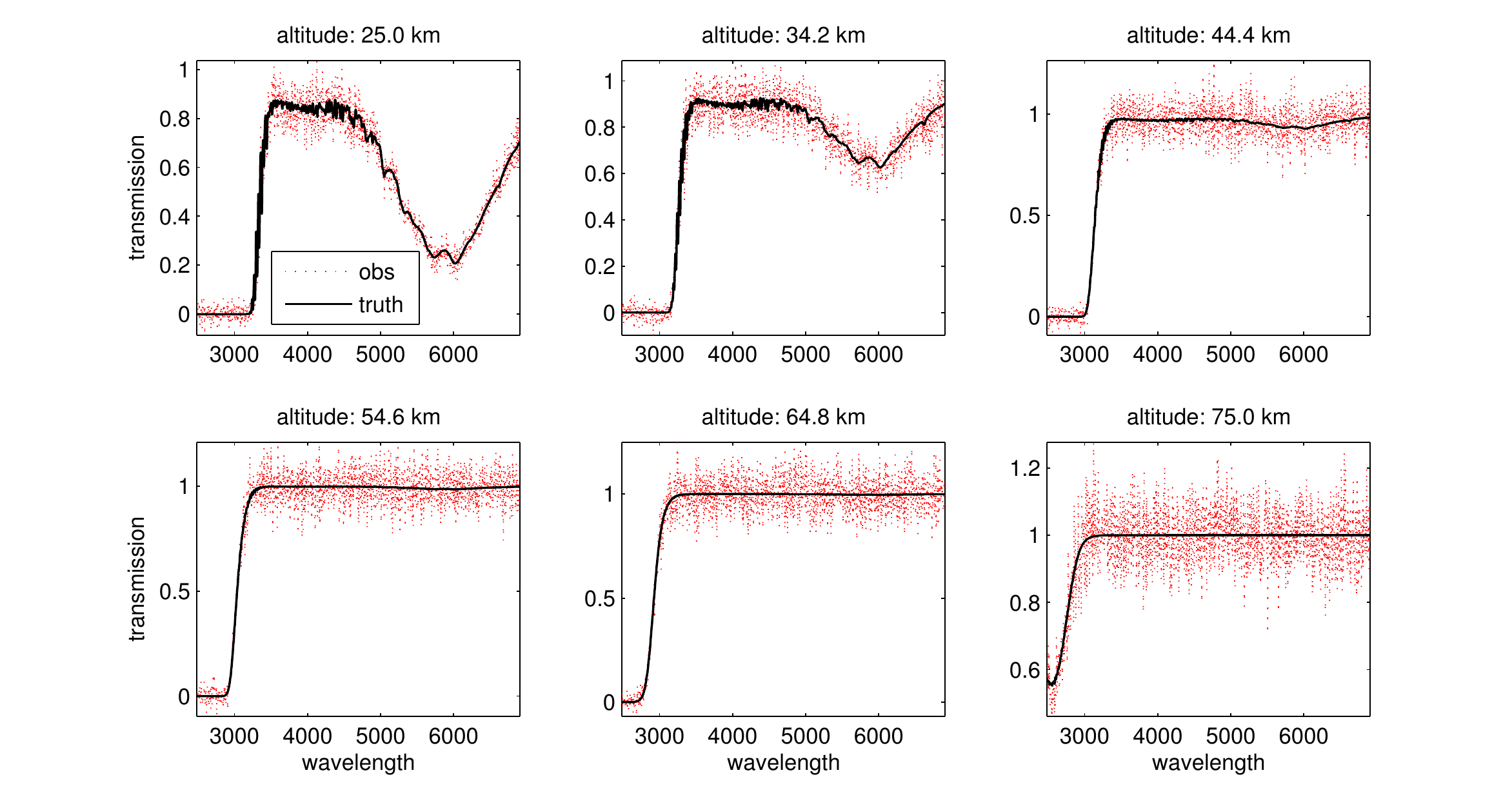}}
\caption{GOMOS example setup: the true transmissions (black) and the observed transmissions (red) for 6 altitudes.}
\label{fig:gomos_data}
\end{figure}

We estimate the log-profiles $\param=\log(\mathrm{vec}(B^\top))$ of the gases instead of the densities $B$ directly. We set a Gaussian process prior for the profiles, which yields  $\param_i \sim N(\mu_i,\Sigma_i)$, where $\param_i$ denotes the elements of vector $\param$ corresponding to gas $i$. The elements of the 50 $\times$ 50 covariance matrices are calculated based on the squared exponential covariance function 
\begin{equation}
C_i(s,s^{\prime})=\sigma_i\exp(-(s-s^{\prime})^2/2s_{0,i}^2),
\end{equation}
where the parameter values are $\sigma_1=5.22$, $\sigma_2=9.79$, $\sigma_3=23.66$, $\sigma_4=83.18$, and $s_{0,i}=10$ for all $i$. %
The priors are chosen to promote smooth profiles and to give a rough idea about the magnitude of the density values. The prior is illustrated in Figure \ref{fig:gomos_prior}. 

\begin{figure}[h!]
\centerline{\includegraphics[width=\textwidth]{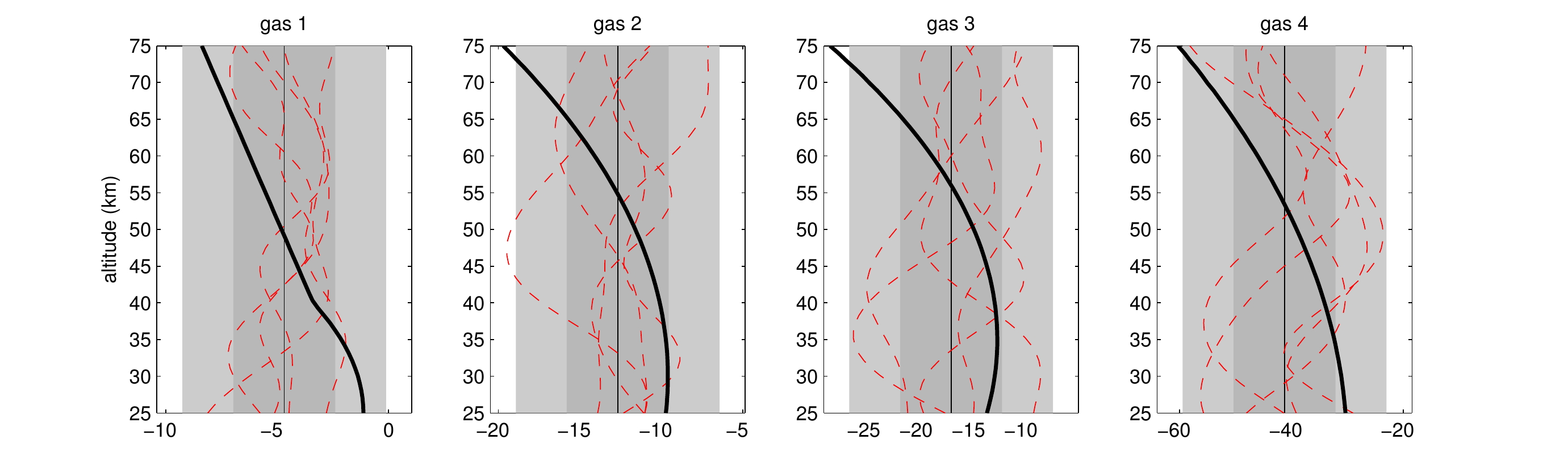}}
\caption{True log-profiles for the 4 gases (black solid lines), 50\% and 95\% confidence envelopes for the prior (grey areas) and 5 samples from the prior (red dashed lines).}
\label{fig:gomos_prior}
\end{figure}

\subsection{Inversion results}

In this particular synthetic example, we know that gas 1 is very well identified by the data. The data also contain information about gases 2 and 3 at some altitudes. Gas 4, on the other hand, is totally unidentified by the data.

The LIS is constructed using $200$ samples---i.e., 200 iterations of Algorithm~\ref{algo:subspace}---starting with the Hessian at the posterior mode. The subspace convergence diagnostic and the generalized eigenvalues are shown in Figure \ref{fig:gomos_LIS_conv}. We choose the truncation thresholds $\tau_{loc} = \tau_{g} = 0.5$. The dimension of the LIS in the end was 22.

\begin{figure}[h!]
\centerline{\includegraphics[width=\textwidth]{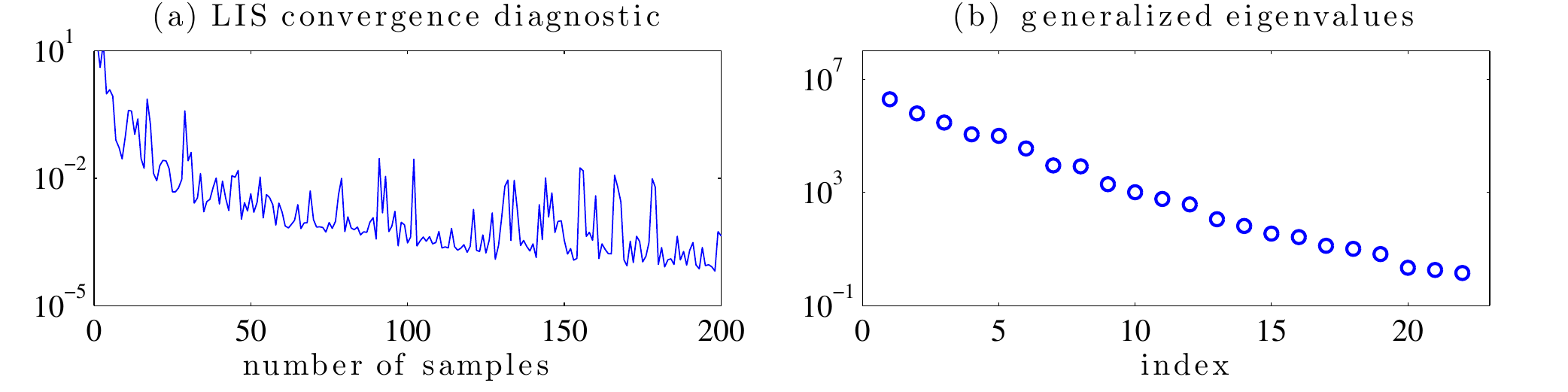}}
\caption{Left: the convergence diagnostic \eqref{eq:lis_conv} versus the number of samples used to construct the LIS. Right: the generalized eigenvalues associated with the final LIS.}
\label{fig:gomos_LIS_conv}
\end{figure}

We compute $10^6$ samples in both the LIS and in the full 200-dimensional space using the Hessian-preconditioned MALA algorithm. In Figure \ref{fig:gomos_est}, the first two columns show the mean gas profile and the mean $\pm$ 1 and 2 standard deviations in the LIS and in the complement space (CS). The third column shows the combined posterior from the LIS and the CS; for comparison, results from full-space MCMC are shown in the fourth column. Note the different scales on the horizontal axes throughout the figure. We observe that the subspace approach, where MCMC is applied only in a 22-dimensional space, yields results very similar to those of full MCMC. In addition, comparing the contributions of the LIS and CS indicates that gas 1 is dominated by the likelihood, whereas the posterior distribution of gas 4 is entirely determined by the prior. Note that the CS contribution for gas 1 is tiny (check the scale), while the LIS contribution for gas 4 is also very small. For gases 2 and 3, the lower altitudes are likelihood-dominated, while the higher altitudes have more contribution from the prior. The full-space MCMC results for gas 4 show a slightly non-uniform mean, but this appears to be the result of sampling variance. By avoiding sampling altogether in the CS, the subspace approach most likely yields a more accurate posterior in this case.

\begin{figure}[h!]
\centerline{\includegraphics[width=\textwidth]{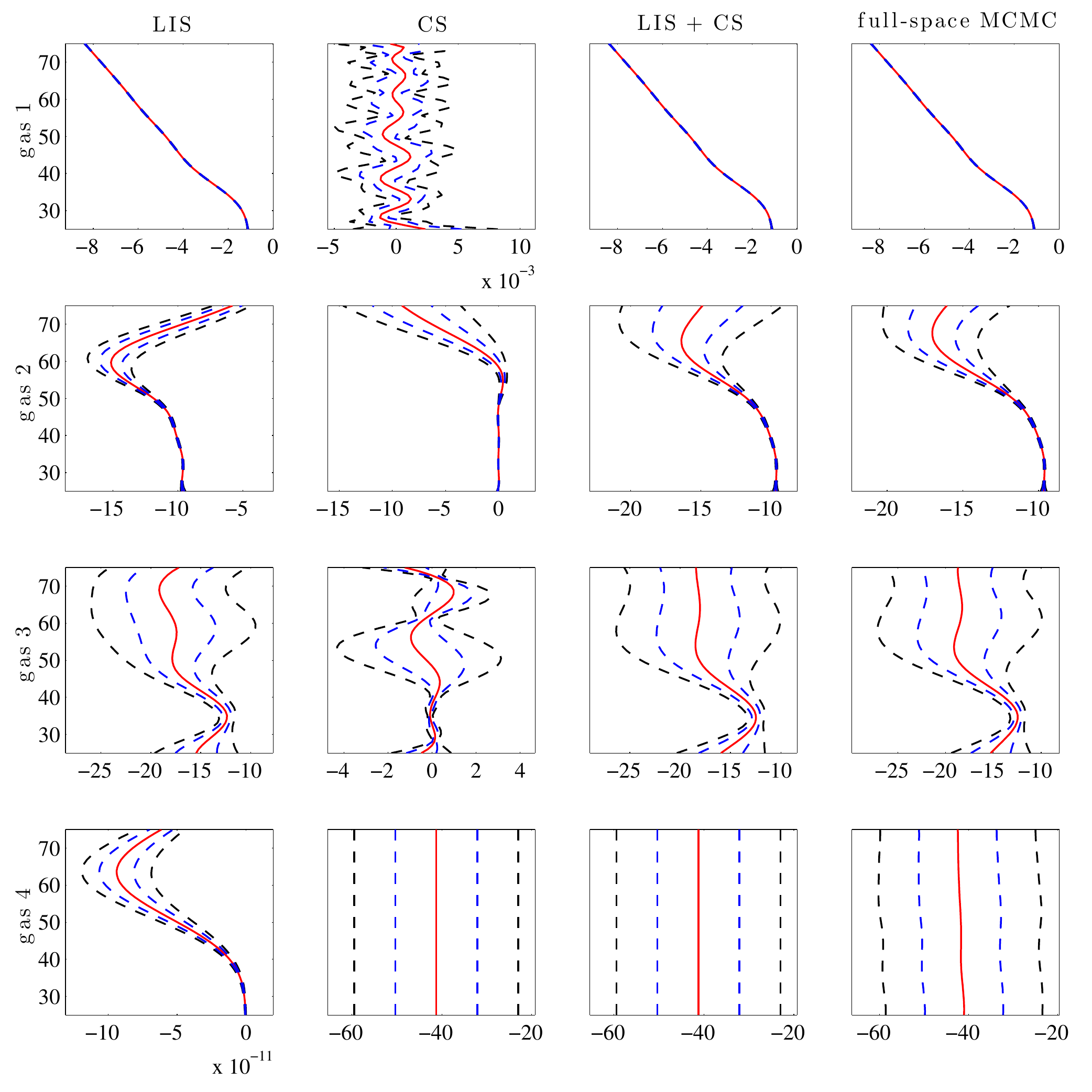}}
\caption{Mean and $\pm$1/$\pm$2 standard deviations for the 4 gas profiles computed from the LIS samples alone (1st column), CS alone (2nd column) and when the LIS and CS are combined (3rd column). The same quantities computed from full-space MCMC are given in the 4th column.}
\label{fig:gomos_est}
\end{figure}

To further illustrate the approach, we plot the first six basis vectors of the LIS in Figure \ref{fig:gomos_LIS}. One can see that the first basis vectors mainly include features of gas 1, which is most informed by the data. The first basis vectors also contain some features of gases 2 and 3 in lower altitudes. Gas 4 is not included in the LIS at all. 

\begin{figure}[h!]
\centerline{\includegraphics[width=\textwidth]{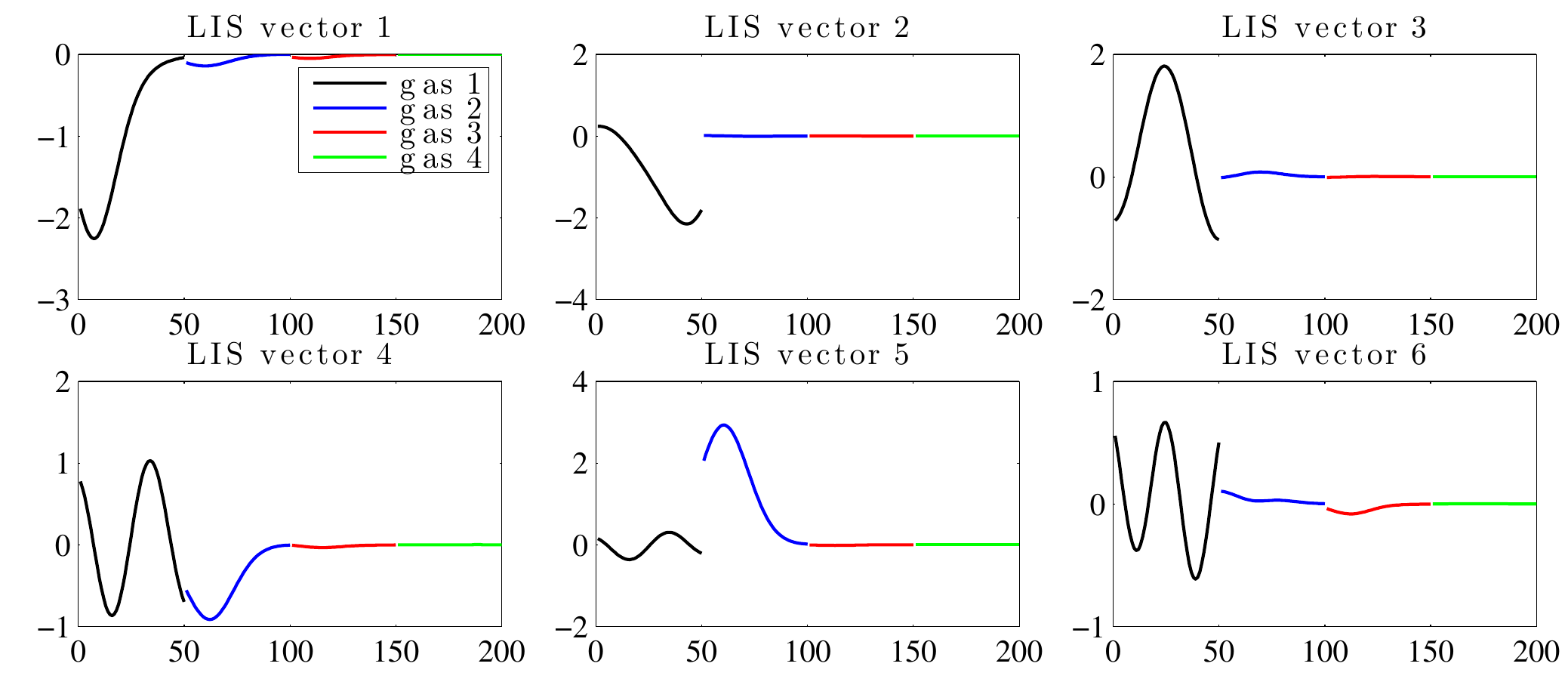}}
\caption{The first six LIS basis vectors for the remote sensing example. The colors indicate the components of the unknown vector corresponding to the different gases. In each subfigure, the $x$-axis denotes the index of the parameter vector, and, for each gas, the components are ordered from low altitudes to high altitudes. (For example, the black line in each figure shows gas 1 profiles from low altitudes to high altitudes, etc.)}
\label{fig:gomos_LIS}
\end{figure}

The dimension reduction obtained via the subspace approach is expected to yield better mixing than the full-space MCMC. For the GOMOS case, the chain autocorrelations for subspace and full-space MCMC are compared in Figure \ref{fig:gomos_auto}. The subspace sampler shows much faster decay of the autocorrelations than full-space MCMC. %

In this test case, the subspace MCMC also has lower computational cost compared to full-space MCMC. 
To simulate a Markov chain for $10^6$ iterations, the subspace MCMC consumed about $2560$ seconds of CPU time, while the full-space MCMC cost $3160$ CPU seconds. 
We note that the CPU time reduction is not as significant as the elliptic example, because the prior covariance is a $200\times200$ dimensional matrix, which is much smaller than the covariance matrix used in the elliptic example. 
To construct the LIS, we simulated Algorithm \ref{algo:subspace} for $200$ iterations. This cost about $136$ seconds of CPU time, which is only about $4.3\%$ of the CPU time used to run full-space MCMC for $10^6$ steps.

\begin{figure}[h!]
\centerline{\includegraphics[width=\textwidth]{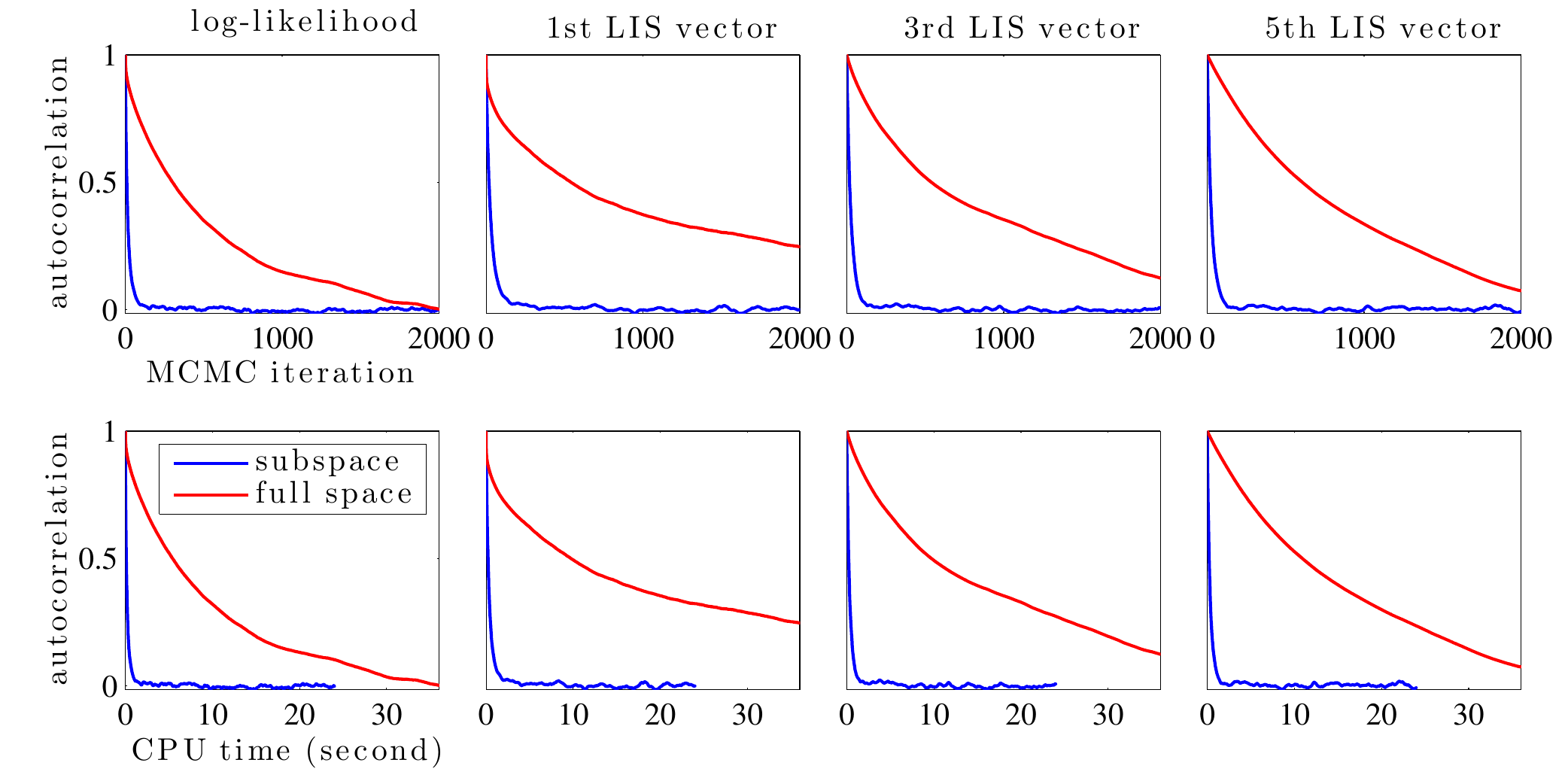}}
\caption{Autocorrelations of full-space (red) and subspace (blue) MCMC for the log-likelihood (1st column) and for the samples projected onto the first, third, and fifth LIS basis vectors (2nd, 3rd and 4rd columns). Top rows shows the autocorrelations computed per MCMC step and bottom row per CPU time.}
\label{fig:gomos_auto}
\end{figure}

\section{Conclusions}
\label{sec:conclusions}

In this paper, we present a new approach for dimension reduction in nonlinear inverse problems with Gaussian priors. Our approach is based on dividing the parameter space into two subspaces: a likelihood-informed subspace (LIS) where the likelihood has a much greater influence on the posterior than the prior distribution, and the complement to the LIS where the Gaussian prior dominates. 
We explore the posterior projected onto the LIS (the ``difficult'' and non-Gaussian part of the problem) with Markov chain Monte Carlo while treating the complement space as exactly Gaussian. This approximation allows us to analytically integrate many functions over the complement space when estimating their posterior expectations; the result is a Rao-Blackwellization or de-randomization procedure that can greatly reduce the variance of posterior estimates. Particularly in inverse problems---where information in the data is often limited and the solution of the problem relies heavily on priors---the dimension of the LIS is expected to be small, and the majority of the directions in the parameter space can be handled analytically.

The dimension reduction approach is based on theory developed for the linear case; in \cite{Linear_Redu_2014} it is shown that in linear-Gaussian problems, the eigendecomposition of the prior-preconditioned log-likelihood Hessian yields an optimal low-rank update from the prior to the posterior, which can be interpreted in terms of a projector whose range is the LIS. Here, we generalize the approach to nonlinear problems, where the log-likelihood Hessian varies over the parameter space. Our solution is to construct many local likelihood-informed subspaces over the support of the posterior and to combine them into a single global LIS. We show how the global LIS can be constructed efficiently in an adaptive manner, starting with the LIS computed at the posterior mode and iteratively enriching the global LIS until a weighted subspace convergence criterion is met. 

We demonstrate the approach with two numerical examples. First is an elliptic PDE inverse problem, based on a simple model of subsurface flow. Though the dimension of the parameter space in our experiments ranges from 1200 to 10800, the dimension of the LIS remains only around 20 and is empirically discretization-invariant. Exploring the LIS by MCMC and analytically treating the Gaussian complement produces mean and variance fields very similar to those computed via MCMC in the full space. Yet the mixing properties and the computational cost of MCMC in the LIS are dramatically improved over those of full-space MCMC. Our second demonstration is an atmospheric remote sensing problem, where the goal is to infer the concentrations of chemical species in the atmosphere using star occultation measurements, as on the satellite-borne GOMOS instrument. The dimension of the full problem used here was 200 (four gaseous species and 50 altitudes for each), while the dimension of the LIS was 22. Again, dimension reduction significantly improves the mixing properties of MCMC without sacrificing accuracy.

To conclude, our dimension reduction approach appears to offer an efficient way to probe and exploit the structure of nonlinear inverse problems in order to perform Bayesian inference at a large scale, where standard algorithms are plagued by the curse of dimensionality. The approach also opens up interesting further research questions: it may be useful, for instance, to apply reduced-order and surrogate modeling techniques in the LIS, making them applicable to much larger problems than before.

\ack We acknowledge Marko Laine and Johanna Tamminen from the Finnish
Meteorological Institute for providing us with the GOMOS figure and
codes that served as the baseline for our implementation for the
remote sensing example. This work was supported by the US Department
of Energy, Office of Advanced Scientific Computing (ASCR), under grant
numbers DE-SC0003908 and DE-SC0009297.

\section*{References}

\end{document}